\newtheorem{theorem}{Theorem}
\newtheorem{definition}{Definition}
\newtheorem{lemma}{Lemma}
\newtheorem{corollary}{Corollary}
\newtheorem{claim}{Claim}
\newcommand{\Graph}{\ensuremath{G}}
\newcommand{\Vertices}{\ensuremath{V}}
\newcommand{\Edges}{\ensuremath{E}}
\newcommand{\Degree}{\ensuremath{\mathrm{deg}}}
\newcommand{\InDegree}[1]{\ensuremath{\Degree^-(#1)}}
\newcommand{\OutDegree}[1]{\ensuremath{\Degree^+(#1)}}
\newcommand{\FixationProbability}{\ensuremath{\mathrm{fp}}}
\newcommand{\FixationProbabilityUnderNeutralEvolution}[1]{\ensuremath{\FixationProbability(#1)}}
\newcommand{\AbsorptionTime}{\ensuremath{\mathrm{AT}}}
\newcommand{\e}{\varepsilon}
\newcommand{\Set}[1]{\{#1\}}
\newcommand{\E}[1]{\ensuremath{\mathbb{E}\left[#1\right]}}
\newcommand{\Var}[1]{\ensuremath{\mathrm{Var}\left[#1\right]}}
\newcommand{\poly}{\text{poly}}
\let\leq=\leqslant 
\let\geq=\geqslant %
\let\le=\leqslant 
\let\ge=\geqslant %
\def\deg{\operatorname{deg}}
\def\indeg{\deg^-}
\def\outdeg{\deg^+}
\crefname{claim}{claim}{claims}
\def\T{\operatorname{T}}
\def\AT{\operatorname{AT}}
\def\FT{\T}
\def\ET{\operatorname{ExtT}}
\def\ATr{\AT_r}
\def\FTr{\FT_r}
\def\ETr{\ET_r}
\def\fp{\mathrm{fp}}
\def\fpr{\mathrm{fp}_r}
\def\twolayer{G}
\newcommand\pmin{\fp_{\rm min}}
\newcommand\fpmin{\pmin}
\def\good{balanced}
\def\short{fast}
\def\pup{p_r^+}
\def\pdown{p_r^-}
\def\bias{\gamma_r}
\def\eps{\varepsilon}
\theoremstyle{definition}
\def\eps{\varepsilon}
\def\deg{\operatorname{deg}}
\def\indeg{\deg^-}
\def\outdeg{\deg^+}
\def\T{\operatorname{T}}
\def\AT{\operatorname{AT}}
\def\FT{\T}
\def\ET{\operatorname{ExtT}}
\def\ATr{\AT_r}
\def\FTr{\FT_r}
\def\ETr{\ET_r}
\def\fp{\mathrm{fp}}
\def\fpr{\mathrm{fp}_r}
\def\twolayer{G}
\def\good{balanced}
\def\short{fast}
\def\SI{Appendices}
\title{
Fixation times on directed graphs%
}
\author[a]{David A. Brewster} 
\author[b,c]{Martin A. Nowak}
\author[b,d]{Josef Tkadlec}
\affil[a]{John A. Paulson School of Engineering and Applied Sciences, Harvard University, Cambridge, MA~02138,~USA}
\affil[b]{Department of Mathematics, Harvard University, Cambridge, MA 02138, USA}
\affil[c]{Department of Organismic and Evolutionary Biology, Harvard University, Cambridge MA~02138,~USA}
\affil[d]{Computer Science Institute, Charles University, Prague, Czech Republic}
\date{}
\begin{document}

\maketitle

%
%






\begin{abstract}
Computing the rate of evolution in spatially structured populations is difficult.
A key quantity is the fixation time of a single mutant with relative reproduction rate $r$ which invades a population of residents.
We say that the fixation time is ``\short'' if it is at most a polynomial function in terms of the population size $N$.
Here we study fixation times of advantageous mutants ($r>1$) and neutral mutants ($r=1$) on \textit{directed} graphs,
which are those graphs that have at least some one-way connections.
We obtain three main results.
First, we prove that for any directed graph the fixation time is \short, provided that $r$ is sufficiently large.
Second, we construct an efficient algorithm that gives an upper bound for the fixation time for any graph and any $r\ge 1$.
Third, we identify a broad class of directed graphs with \short{} fixation times for any $r\ge 1$.
This class includes previously studied amplifiers of selection, such as Superstars and Metafunnels.
We also show that on some graphs the fixation time is not a monotonically declining function of $r$; in particular, neutral fixation can occur faster than fixation for small selective advantages. 
\end{abstract}


\section{Introduction}
Evolution is a stochastic process that acts on populations of reproducing individuals.
Two main driving forces of evolutionary dynamics are mutation and selection~\cite{kimura1968evolutionary,moran1958random,ewens2004mathematical}.
Mutation generates new variants and selection prunes them.
When new mutations are sufficiently rare, the evolutionary dynamics are characterized by the fate of a single new mutant.
The mutant can either take over the whole population or become extinct.
Even when the mutation grants its bearer a relative fitness advantage $r\ge 1$,
it might still go extinct due to random fluctuations~\cite{durrett1994importance}.
Two key parameters that quantify the fate of the newly occurring mutation are the fixation probability and the fixation time~\cite{whitlock2003fixation,slatkin1981fixation}.
Here we study the effect of spatial structure on those quantities.

Spatial models have a long history of investigation in ecology
\cite{Bonachela_Pringle_Sheffer_Coverdale_Guyton_Caylor_Levin_Tarnita_2015,Tilman_May_Lehman_Nowak_1994,Comins_Hassell_May_1992,Hassell_Comins_Mayt_1991,Levin_1992,Levin_1976,tilman1997spatial}, population dynamics \cite{durrett1994importance}, population genetics \cite{Barton_1993,Fisher_Ford_1950,Maruyama_1970,Nagylaki_Lucier_1980,S_1932,slatkin1981fixation,Wright_1931}, evolutionary game theory \cite{Allen_Lippner_Chen_Fotouhi_Momeni_Yau_Nowak_2017,Hauert_Doebeli_2004,Nakamaru_Matsuda_Iwasa_1997,Nowak_May_1992,Ohtsuki_Hauert_Lieberman_Nowak_2006}, infection dynamics \cite{Grenfell_Bjørnstad_Kappey_2001,Lloyd_May_1996,May_Lloyd_2001,May_Nowak_1994} and cancer evolution \cite{Noble_Burri_Le_Sueur_Lemant_Viossat_Kather_Beerenwinkel_2022,Waclaw_Bozic_Pittman_Hruban_Vogelstein_Nowak_2015}. The classical investigation in population genetics includes the debate between Fisher and Wright \cite{Fisher_Ford_1950} and hybrid zones \cite{Barton_1979}. A biological population can be structured in the sense of geographical distribution, age structure, or specific interaction patterns. Human populations structure is often studied in terms of social networks \cite{Mitchell_1974}.

Spatial structure has profound effects on both the fixation probability and the fixation time.
Those effects are studied within the framework of
Evolutionary Graph Theory~\cite{lieberman2005evolutionary,diaz2021survey,donnelly1983finite}.
There, individuals are represented as nodes of a graph (network).
The edges (connections) of the graph represent the migration patterns of offspring.
The edges can be one-way or two-way.
Graphs can represent the well-mixed population, spatial lattices, island sub-populations, or arbitrary complex spatial structures.
Those directed graphs can arise by the flow from upstream to downstream demes in meta-populations, by cellular differentiation in somatic evolution, or in age structured populations.
Also in human social networks many interactions are one way---such as from influencer to follower or from teacher to learner. 

Previous research investigated population structures with various effects on fixation probability and time \cite{hindersin2015most,broom2010evolutionary,broom2010two,monk2020wald,monk2021martingales}.
For example, isothermal graphs have both the same fixation probability
and the same distribution of mutant population size changes over time as the well-mixed population \cite{lieberman2005evolutionary,monk2020wald}.
Suppressors of selection reduce the fixation probability of advantageous mutants \cite{nowak2003linear}, and amplifiers of selection enhance the fixation probability of advantageous mutants \cite{nowak2006evolutionary}.
Amplifiers are population structures that could potentially accelerate the evolutionary search \cite{adlam2015amplifiers}.
Known classes of amplifiers include families such as Stars \cite{hadjichrysanthou2011evolutionary,monk2014martingales,chalub2014asymptotic}, Comets \cite{pavlogiannis2017amplification}, Superstars \cite{diaz2013fixation,jamieson2015fixation}, or Megastars \cite{galanis2017amplifiers}.

Interestingly, the amplification typically comes at a cost of increasing the fixation time \cite{tkadlec2019population}, sometimes substantially \cite{pavlogiannis2018construction}.
This is problematic, since when fixation times are extremely long, fixation is not a relevant event any more,
and thus the fixation probability alone is not the most representative quantity \cite{tkadlec2021fast,sharma2022suppressors}.
It is therefore paramount to understand how the population structure affects the fixation time and, in particular, what are the population structures for which the fixation time is ``reasonably \short{}.''


Borrowing standard concepts from computer science \cite{cormen2022introduction}, in this work we say that
fixation time is \textit{\short{}} if the fixation time is (at most) polynomial in terms of the population size $N$.
Otherwise we say that the fixation time is \textit{long}, and the corresponding population structure is \textit{slow}.
Two important known results are: (i) for all \textit{undirected} graphs the fixation time is \short{} \cite{diaz2014approximating,goldberg2020phase};
and (ii) if some edges are one-way (if the graph is directed),
then the fixation time can be exponentially long \cite{diaz2016absorption}.
The latter result has an important negative consequence:
when the fixation time is exponentially long,
we know no tool to efficiently simulate the process.
Therefore, computing or approximating any relevant quantities for realistic population sizes is in practice infeasible.

In this work, we present three positive results that concern fixation times on directed graphs (where some or all edges are one-way).
First, we prove that for any directed graph the fixation time is \short{},
provided that the mutant fitness advantage $r$ is sufficiently large.
Second, we devise an efficient algorithm that gives an upper bound on the
fixation time, for any graph and any $r\ge 1$.
The bound can be used to estimate how long one needs to run the simulations until they terminate.
Third, we identify a broad class of directed graphs for which the fixation
times are \short{} for any $r\ge 1$.
This class includes many previously studied amplifiers of selection, such as Superstars and Metafunnels.
To conclude, we discuss important algorithmic consequences that enable efficient computational exploration of various properties of directed graphs.


\section{Model}
In this section we define the notions we use later, such as
the population structure (represented by a graph),
the evolutionary dynamics (Moran Birth-death process),
and the key quantities (fixation probability and fixation time).

\subsection{Population structure}
The spatial population structure is represented by a graph (network) $\Graph=(\Vertices,\Edges)$, where $\Vertices$ is the set of $N$ nodes (vertices) labeled $v_1,\dots,v_N$ and $\Edges$ is the set of directed one-way edges (links) connecting pairs of different nodes.
A two-way connection between nodes $u$ and $v$ is represented by two one-way edges $u\to v$ and $v\to u$.
We assume that the graph is strongly connected.
For any node $v$, the number of edges incoming to $v$ is called the indegree (denoted $\indeg(v)$), and the number of outgoing edges is called the outdegree (denoted $\outdeg(v)$).
When the two numbers coincide, we call them the degree (denoted $\deg(v)$).

\subsection{Graph classes}
We say that a graph is \textit{undirected} if for every edge $u\to v$ there is also an edge $v\to u$ in the opposite direction.
Otherwise we say that a graph is \textit{directed}.
We say that a graph is \textit{regular} if all nodes have the same degree, that is, there exists a number $d$ such that $\indeg(v)=\outdeg(v)=d$ for all nodes $v\in \Vertices$.
We say that a graph $\Graph$ is \textit{Eulerian} (also known as a circulation) if $\indeg(v)=\outdeg(v)$ for each node $v$.
Finally, in this work we say that a graph is \textit{\good{}} if an equality
\[\frac1{\outdeg(v)} \cdot \sum_{w\in \Vertices\colon v\to w \in \Edges} \frac1{\indeg(w)}
=
\frac1{\indeg(v)}\cdot \sum_{u\in \Vertices\colon u\to v\in \Edges} \frac1{\outdeg(u)}
\]
holds for all nodes $v$.
Here the left-hand side represents the average indegree of the successors of $v$, whereas the right-hand side represents the average outdegree of the predecessors of $v$.
It is straightforward to check that the class of \good{} graphs includes the regular graphs and the undirected graphs, as well as other graph classes such as Superstars or Metafunnels\cite{lieberman2005evolutionary}, see \SI.
Below we will prove that the fixation times on all \good{} graphs are \short{} for any $r\ge 1$.

\subsection{Moran Bd process}
To model the evolutionary dynamics we consider the standard Moran Birth-death process.
Each node of the graph is occupied by a single individual.
Initially, some individuals are wild-type residents with normalized fitness equal to~$1$, and some individuals are mutants with relative fitness advantage $r\ge 1$.
Given a graph $\Graph$ and a relative fitness advantage $r\ge 1$, Moran Birth-death process is a discrete-time stochastic process, where in each step:
\begin{enumerate}
\item First (Birth), we select an individual with probability proportional to its fitness. Suppose we selected node~$u$.
\item Second (death), we select an outgoing neighbor of $u$ uniformly at random. Suppose we selected node~$v$.
\item Finally (update), we replace the individual at node $v$ by a copy of individual at node~$u$.
\end{enumerate}
At each time-step, the current \textit{configuration} is the subset of nodes occupied by mutants.
Since the graph is strongly connected, almost surely we eventually obtain a configuration where either all nodes are mutants (we say that mutants \textit{fixed}), or all nodes are residents (we say that mutants \textit{went extinct})~\cite{diaz2014approximating}.

\subsection{Fixation probability and fixation time}
The key quantities that we consider in this work are fixation probability and fixation time.

Given a graph $\Graph$, a mutant fitness advantage $r\ge 1$, and a current configuration $S\subseteq\Vertices$ of nodes occupied by mutants, the \textit{fixation probability} $\fp_r(\Graph,S)$ is the probability that starting from $S$, the mutants eventually fix (as opposed to going extinct).
Morever, we define an auxiliary quantity $\pmin$ that turns out to be useful later in our results.
Formally, given a graph $\Graph$ and $r=1$, for $i=1,\dots,N$ denote by $\fp^{(i)}(\Graph)=\fp_{r=1}(\Graph,\{v_i\})$ the fixation probability of a single neutral mutant who initially appears at node $v_i$. We define $\pmin(\Graph)=\min_i \fp^{(i)}(\Graph)$ to be the smallest of those $N$ fixation probabilities.

To measure the duration of the process until fixation (or extinction) occurs, different notions are used.
The \textit{(expected) absorption time} $\ATr(\Graph,S)$ is the expected number of steps of the Moran Birth-death process until the process terminates, regardless of what is the outcome (mutant fixation or extinction).
In contrast, \textit{(expected) fixation time} $\FTr(\Graph,S)$ is the expected number of steps averaged over only those evolutionary trajectories that terminate with mutant fixation.
Similarly, one can define the \textit{(expected) extinction time} $\ETr(\Graph,S)$ averaging over only those trajectories that terminate with the mutant going extinct.
By linearity of expectation, the three quantities are related as $\ATr(\Graph,S) = \fp_r(\Graph,S)\cdot \FTr(\Graph,S) + (1-\fp_r(\Graph,S))\cdot\ETr(\Graph,S)$.
Note that in this work, absorption time, fixation time, and extinction time are mean times to absorption, making them scalar values rather than random variables.
Information about the random variable can be recovered from its expectation using concentration bounds such as Markov's inequality~\cite{diaz2014approximating}.
Our objective in this work is to provide upper bounds on the absorption time and on the fixation time.
To that end, given a graph $\Graph$ and a mutant fitness advantage $r\ge 1$, let $\FTr(\Graph)=\max_{S\subseteq\Vertices, S\neq\emptyset} \FTr(\Graph,S)$ be the largest fixation time among all possible initial configurations. 
In the limit of strong selection $r\to\infty$ we also define $\T_\infty(\Graph)=\lim_{r\to\infty} T_r(\Graph)$.
This regime is called the ecological scenario~\cite{ibsen2015computational} and corresponds to new invasive species populating an existing ecosystem.

\subsection{Asymptotic notation}
We say a function $f(N)$ is (at most) \textit{polynomial} if there exists a positive constant $c$ such that $f(N)\le N^c$ for all large enough $N$.
Examples of polynomial functions are $f(N)=\frac12N(N+1)$ and $f_2(N)=10\cdot N\log N$, whereas functions such as $g(N)=1.1^N$ and $g_2(N)=2^{\sqrt N}$ are not polynomial, since they grow too quickly.
In computer science, problems that can be solved using polynomially many elementary computations are considered tractable.
In alignment with that, given a population structure $\Graph$ with $N$ nodes, we say that fixation time is \textit{\short{}} if it is
at most polynomial in terms of the population size $N$.

\section{Results}
We present three main types of results.


\subsection{Fixation time is \short{} when selection advantage is strong enough}

As our first main result, we prove that the fixation time on any directed graph is \short{}, provided that the mutant fitness advantage $r$ is large enough.

As an illustration, for every $N=4k$ we consider a graph $\twolayer_N$ depicted in \cref{fig:transient}(a).
It consists of four columns of $k$ nodes each.
The grey edges within the yellow region are two-way.
The black one-way edges point from the side columns to the middle columns.
When mutants initially occupy the left part of the graph, the only way forward for them is to progress upward through the third column.
But while there, mutants are under an increased pressure due to the resident nodes in the rightmost column.
The same applies to residents. They can only make progress by climbing upward through the second column, but there they are under pressure due to mutants in the leftmost column. 
As a consequence, the fixation time crucially depends on $r$.
When $r= 1.1$, \cref{fig:transient}(b) shows that the fixation time scales exponentially in $N$ (that is, it is long).
In contrast, in the limit of large $r$ 
the fixation time is less than $N^2$, that is, it is \short. 


\begin{figure}[h]
  \centering
   \includegraphics[width=\linewidth]{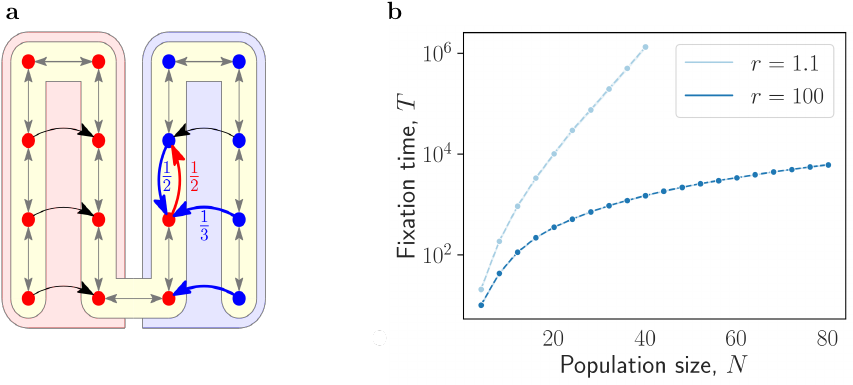}
\caption{
\textbf{Long and \short{} fixation times on a four-column graph $\twolayer_N$.}
\textbf{a,} 
For $N=4k$, a graph $\twolayer_N$ consists of four columns of $k$ nodes each (here $k=4$ and $N=16$).
The grey edges within the yellow region are two-way, the black edges going from the side columns to the corresponding vertices of the middle column are one-way.
Initially mutants occupy the left half $L$ and residents occupy the right half.
As mutants (red nodes) spread upward through the third column, they can propagate along only one edge (red), whereas residents (blue nodes) fight back along multiple edges (blue).
\textbf{b,} The timescale to fixation crucially depends on the mutant fitness advantage $r$. When $r=1.1$, the fixation time $\FTr(\twolayer_N,L)$ is exponential in $N$, whereas when $r=100$ it is polynomial. Each data point is an average over at least $10^3$ simulations.
}
\label{fig:transient}
\end{figure}

In general, we can prove the following result about an arbitrary population structure.
\begin{theorem}\label{thm:r-infty} Let $\Graph_N$ be an arbitrary graph on $N$ nodes.
Suppose that $r\ge N^2$.
Then $\ATr(\Graph_N)\leq 2N^3$ and $\FTr(\Graph_N)\le 3N^3$.
\end{theorem}

\cref{thm:r-infty} implies that while the fixation time on certain graphs can be long for some values of $r$, this effect is inevitably transient, and the fixation time becomes \short{} once $r$ exceeds a certain threshold.
The intuition behind the proof is that if $r$ is large enough, the size of the mutant subpopulation is always more likely to increase than to decrease,
regardless of which nodes are currently occupied by mutants.
Thus, the evolutionary process can be mapped to a random walk with a constant positive bias.
It is known that such biased random walks absorb polynomially quickly.
See \SI{} for details.

An attractive feature of \cref{thm:r-infty} is that it applies to all directed graphs. 
An obvious limitation is that the condition $r\ge N^2$ is unrealistic, except possibly for the regime $r\to\infty$ that has been studied under the name ecological scenario~\cite{ibsen2015computational}.
Therefore, as our second result, we considerably relax this condition for graphs with certain structural features.
A directed graph is said to be Eulerian (also called a \textit{circulation}) if each node has the same indegree as outdegree.
In that case, we refer to the number $\indeg(v)=\outdeg(v)$ simply as a \textit{degree} of node $v$.

\begin{theorem}\label{thm:eulerian}
    Let $\Graph_N$ be an Eulerian graph on $N$ nodes
    with smallest degree $\delta$ and largest degree $\Delta$.
    Suppose that $r\geq \frac{\Delta}{\delta}\cdot(1+\e)$
    for some $\e>0$.
    Then
    $\ATr(\Graph_N) \leq \frac{2+\e}{\e}\cdot N^3$
    and
    $\FTr(\Graph_N)\leq \frac{(1+\e)(2+\e)}{\e^2}\cdot  N^{3}$.
\end{theorem}

To illustrate~\cref{thm:eulerian} we point out two special cases (for the full proof see~\SI).

First, consider any regular graph $\Graph_N$, that is, a graph where all nodes have the same indegree and outdegree equal to $d$.
Then, the graph is Eulerian and we have $d=\Delta=\delta$, and thus \cref{thm:eulerian} implies that $\FTr(\Graph_N)$ is at most a polynomial in $N$ and $d$ for any $r\ge 1$. In other words, fixation time on any regular graph is \short{} for any $r\ge 1$ (we note that this result is known\cite{diaz2016absorption}).

Second, consider an Eulerian graph that is ``close to being regular'', in the sense that each node has degree either 4 or 5.
An example of such a graph is a square lattice with several additional long-range connections.
Then, \cref{thm:eulerian} implies that the fixation time is \short{} for every $r> 5/4=1.25$.

\subsection{Fixation time for
small selective advantage}
The above results show that for any fixed graph $\Graph$, the fixation time is \short{} when $r$ is sufficiently large.
It is natural to hope that perhaps for any fixed graph $\Graph$ the fixation time is a monotonically decreasing function of $r$ for $r\ge 1$.
However, this is not the case, as shown in \cref{fig:t-nonmonotone}.

\begin{figure}[h]
  \centering
\includegraphics[width=\linewidth]{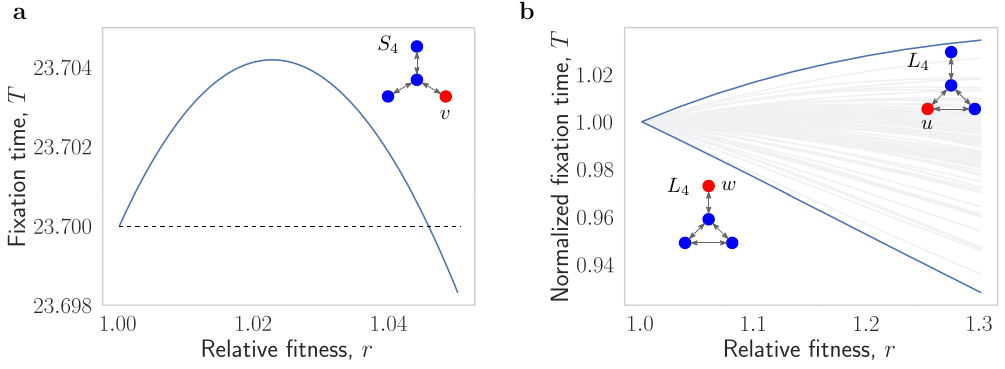}
\caption{
\textbf{Fixation time is not monotone in $r$.}
\textbf{a,} In an (undirected) star graph $S_4$ on 4 nodes, one node (center) is connected to three leaf nodes by two-way edges. When the initial mutant appears at a leaf $v$, the fixation time $\FTr(S_4,\{v\})$ increases as $r$ increases from $r=1$ to roughly $r=1.023$. Then it starts to decrease.
\textbf{b,} Normalized fixation time $\FTr(\Graph,\{v\})/\T_{r=1}(\Graph,\{v\})$ as a function of $r\in[1,1.3]$, for all 83 strongly connected graphs $\Graph$ with 4 nodes, and all four possible mutant starting nodes $v$. 
As $r$ increases, the fixation time goes up for 182 of the $4\cdot 83=332$ possible initial conditions.
The increase is most pronounced for the so-called \textit{lollipop} graph $L_4$ and a starting node $u$.
In contrast, for the same lollipop graph and a different starting node $w$, the fixation time decreases the fastest.
}
\label{fig:t-nonmonotone}
\end{figure}

Briefly speaking, the effect responsible for the increase in fixation time when $r=1+\eps$ is that by increasing the mutant fitness advantage, certain evolutionary trajectories that used to lead to mutant extinction instead lead to mutant fixation.
Since those ``newly fixating'' trajectories might generally take relatively long to fix, the average length of the fixating trajectories can go up. Similarly, the absorption time can also go up as we increase $r$. Those findings are in alignment with the stochastic slowdown phenomenon~\cite{Altrock_Gokhale_Traulsen_2010}.


Despite the lack of monotonicity, we can show that the fixation time cannot go up too much as we increase~$r$.
Recall that $\pmin(\Graph_N)=\min\{\fp_{r=1}(\Graph_N,\{v\}) \mid v\in \Graph_N\}$ denotes the fixation probability under neutral drift ($r=1$), when the initial mutant appears at a node $v$ with the smallest fixation probability.
Note that for any graph with $N$ nodes we have $\pmin(\Graph_N)\le 1/N$, but $\pmin(\Graph_N)$ could in general be substantially smaller than $1/N$.
Finally, the quantity $\pmin(\Graph_N)$ can be computed efficiently by solving a linear system of $N$ equations \cite{donnelly1983finite,broom2010two,maciejewski2014reproductive}.

We can now state our second main result.
\begin{theorem}\label{thm:tr-t1}
Fix a graph $\Graph_N$ and $r\ge 1$.
Then $\FTr(\Graph_N)\le
\frac {N^6}{(\pmin(\Graph_N))^4}.$
\end{theorem}

We note that \cref{thm:tr-t1} yields an efficiently computable upper bound on $\FTr(\Graph_N)$.
In the next section we elaborate on the computational consequences of this result.
In the rest of this section, we give a brief intuition behind the proof of~\cref{thm:tr-t1}.

The proof relies on two ingredients. 
First, instead of considering the process with mutant fitness advantage $r$, we consider the neutral process that corresponds to $r=1$.
There, using a martingale argument we show that the fixation time $\T_{r=1}(\Graph_N)$ can be bounded from above in terms of the quantity $\pmin$.
The intuition is that as long as all fixation probabilities are non-negligible,
all active steps of the stochastic process have substantial magnitude either towards fixation or towards extinction.
As a consequence, we are able to argue that either fixation or extinction will occur after not too many steps.
All in all, this yields an upper bound on fixation time $\T_{r=1}(\Graph_N)$ of the neutral process in terms of the quantity $\pmin$. See 
\SI{} for details.

As our second ingredient, we translate the bound on $\T_{r=1}(\Graph_N)$ into a bound on $\FTr(\Graph_N)$ for any $r\ge 1$.
We note that, as indicated in \cref{fig:t-nonmonotone}, for a fixed graph $\Graph_N$ the fixation time is in general not a monotonically decreasing function of $r$.
Nevertheless, the continuous versions of two processes can be coupled in a certain specific way, which allows us to argue that 
while $\FTr(\Graph_N)$ can be somewhat larger than $\T_{r=1}(\Graph_N)$, it cannot be substantially larger.
In this step, we again use the quantity $\pmin$. See 
\SI{} for details.

\subsection{Fixation time is \short{} when the graph is \good}
As noted above, \cref{thm:tr-t1} provides an upper bound on the fixation time for any graph $\Graph_N$ and any mutant fitness advantage $r\ge 1$, in terms of the quantity $\pmin(\Graph_N)$.
We have $0\le \pmin(\Graph_N)\le 1/N$.
When the quantity $\pmin(\Graph_N)$ is exponentially small, the upper bound from \cref{thm:tr-t1} becomes exponentially large, and thus not particularly interesting.
However, for many graphs the quantity $\pmin(\Graph_N)$ turns out to be much larger, namely inversely proportional to a polynomial in $N$.
In those cases, \cref{thm:tr-t1} implies that the fixation time $\FTr(\Graph_N)$ is \short{} for any $r\ge 1$.

In particular, as our third main result we prove that this occurs for a broad class of graphs which we call \good{} graphs.
Formally, we say that a graph $\Graph_N$ is \textit{\good{}} if an equality
\[
\frac1{\indeg(v)}\cdot \sum_{u\colon u\to v\in \Edges} \frac1{\outdeg(u)}
=
\frac1{\outdeg(v)} \cdot \sum_{w\colon v\to w \in \Edges} \frac1{\indeg(w)}
\]
holds for all nodes $v$.
Here the left-hand side represents the average outdegree of the predecessors of $v$, whereas 
the right-hand side represents the average indegree of the successors of $v$.
We note that the family of \good{} graphs includes many families of graphs studied in the context of Moran process in the existing literature, such as the undirected graphs \cite{diaz2014approximating}, regular (possibly directed) graphs \cite{diaz2016absorption}, Superstars and Metafunnels \cite{lieberman2005evolutionary},
Megastars~\cite{galanis2017amplifiers,monk2018martingales},
cyclic complete multipartite graphs~\cite{Monk_van_Schaik_2022}, or directed fans \cite{allen2020transient}, see \cref{fig:graph-classes}.

\begin{figure}[h]
  \centering
  \includegraphics[align=c,width=\linewidth]{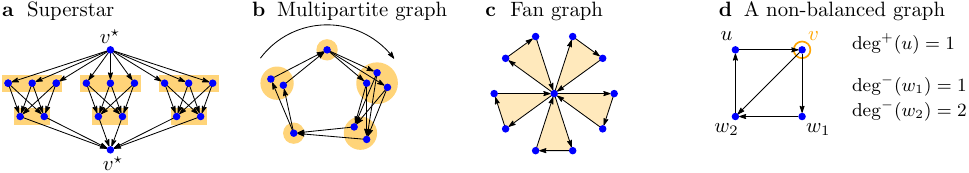}
\caption{
\textbf{Types of \good{} graphs.}
The class of \good{} graphs includes the following families of graphs studied in the context of Moran process in the existing literature.
\textbf{a,} Superstars~\cite{lieberman2005evolutionary} are the first proposed strong amplifiers of selection.
\textbf{b,} Complete multipartite graphs~\cite{Monk_van_Schaik_2022} are a rare example of high-dimensional graphs for which the fixation probability of advantageous mutants can be expressed using an explicit formula.
\textbf{c,} A certain form of Fan graphs~\cite{allen2020transient} (with weighted and undirected edges) constitutes the strongest currently known amplifiers of selection under death-Birth updating.
\cref{thm:good} implies that the fixation time on all those graphs is \short{} for all $r\ge 1$.
\textbf{d,} Not all graphs are balanced. For example, here for the highlighted node $v$ the left hand side is $1$, whereas the right-hand side is $\frac12(1/1+1/2)=0.75\ne 1$.
}
\label{fig:graph-classes}
\end{figure}

We have the following theorem.

\begin{theorem}\label{thm:good}
Let $\Graph_N$ be a \good{} graph. Then:
\begin{enumerate}
\item $\fp_{r=1}(\Graph_N,u)=\frac{1/\indeg(u)}{\sum_{v\in \Vertices} 1/\indeg(v)}\ge 1/N^2$ for any node $u$.
\item $\FTr(\Graph_N)\le N^{14}$ for any $r\ge 1$.
\end{enumerate}
\end{theorem}

\cref{thm:good} implies that the fixation time on all \good{} graphs is \short{} for all $r\ge 1$.
Similarly, we can prove that it is \short{} for Megastars \cite{galanis2017amplifiers} assuming $r\ge 1$ (see \SI).

The proof of the first part of \cref{thm:good} relies on the fact that in the neutral case $r=1$ the fixation probability is additive.
This allows us to reduce the size of the linear system that describes the underlying Markov chain from $2^N$ equations to $N$ equations.
For \good{} graphs, this system takes a special form that admits an explicit solution. The second part then follows directly from \cref{thm:tr-t1}. See \SI{} for details.

We note that the second part of \cref{thm:tr-t1} has an important computational consequence.
Since the fixation time on any \good{} graph is bounded from above for any $r\ge 1$, 
individual-based simulations of the evolutionary process are guaranteed to terminate quickly with high probability \cite{diaz2014approximating}.
Any relevant quantities of interest, such as the fixation probability of the mutant with $r\ge 1$, can thus be efficiently approximated to arbitrary precision.
In particular, \cref{thm:tr-t1} yields a fully-polynomial randomized approximation scheme (FPRAS) for the fixation probability on \good{} graphs with any $r\ge 1$.

\begin{theorem}\label{thm:fpras} There is a FPRAS for fixation probability on \good{} graphs for any $r\ge 1$.
\end{theorem}

We note that \cref{thm:fpras} applies also to any (not necessarily \good{}) graph $\Graph_N$, provided that the quantity $\pmin(\Graph_N)$ is inversely proportional to a polynomial. This is the case for instance for Megastars.
See \SI{} for details.
Moreover, when $\pmin(\Graph_N)$ is smaller than that, \cref{thm:tr-t1} still gives an explicit, efficiently computable upper bound on the fixation time that can be used to bound the running time of any individual-based simulations.


\subsection{Computational experiments}
Finally, to further illustrate the scope of our results we run several computational experiments on graphs with small population size $N$.
We use \texttt{nauty} \cite{mckey1990nauty} to perform such
enumerations.
Since already for $N=6$ there are more than one million non-isomorphic strongly connected directed graphs, 
we consider $N= 5$.
For each of the 5048 graphs with $N= 5$ we compute the fixation time and the fixation probability under uniform initialization by solving the underlying Markov chain using numerical methods (\cref{fig:1M}).

\begin{figure}[h]
  \centering
   \includegraphics[width=\linewidth]{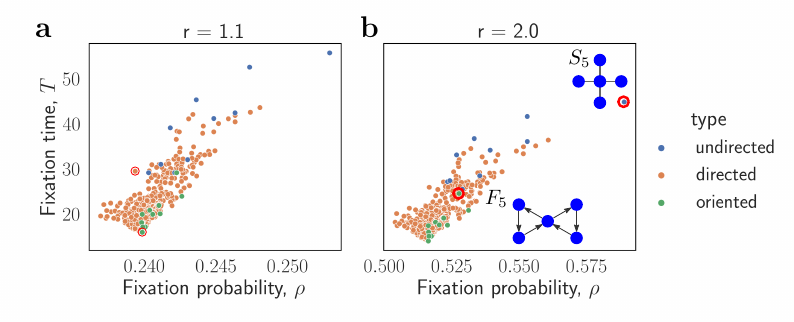}
\caption{
\textbf{} 
Fixation time and fixation probability of a single mutant under uniform initialization for all 5048 graphs with $N= 5$ nodes, for \textbf{a,} $r=1.1$ and \textbf{b,} $r=2$. Each graph is represented as a colored dot.
The undirected graphs (with all edges two-way) are labeled in blue.
The oriented graphs (with no edges two-way) are labeled in green.
All other directed graphs are labeled in orange.
The slowest graph is the (undirected) Star graph $S_5$.
Among the oriented graphs, the slowest graph is the Fan graph $F_5$.
}
\label{fig:1M}
\end{figure}

The slowest graph is the (undirected) Star graph.
Note that when $N$ is large the fixation time on a Star graph is known to be proportional to roughly $N^2$ \cite{tkadlec2019population}.

Among the oriented graphs, the slowest are variants of either a fan graph $F_N$, or a vortex graph $V_N$.
Since both the fan graphs and the vortex graphs belong to the class of \good{} graphs, the fixation time on those graphs is \short{} for any population size $N$ and any mutant fitness advantage $r\ge 1$ due to \cref{thm:good} (see \cref{fig:fans} for empirical support).
The fixation time appears to be proportional to roughly $N^2$ (see \cref{sec:oriented}).

\begin{figure}[h]
  \centering
 \includegraphics[width=\linewidth]{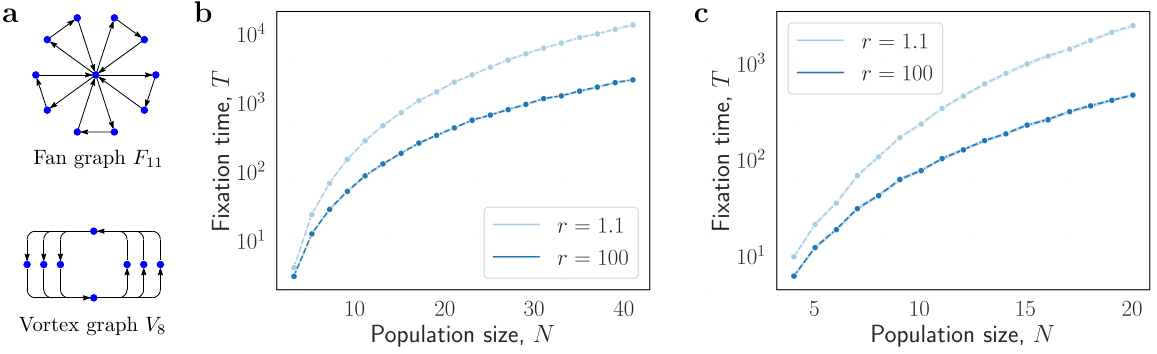}
   
\caption{
\textbf{Fixation time on slow oriented graphs.}
\textbf{a,} The Fan graph with $k$ blades has $N=2k+1$ nodes and $3k$ one-way edges (here $k=5$ which yields $N=11$). The Vortex graph with batch size $k$ has $N=2k+2$ nodes and $4k$ edges (here $k=3$ which yields $N=8$).
\textbf{b-c,} For both the Fan graphs and the Vortex graphs the fixation time scales roughly as $N^2$, both for $r=1.1$ and $r=100$. (Each data point is an average over 1000 simulations.)
}
\label{fig:fans}
\end{figure}

Together, those results suggest that even though directed graphs with exponentially long fixation times do exist, in practice most small directed graphs reach fixation reasonably quickly.


\section{Discussion}
A decade ago, a foundational work by Diaz et.~al.~showed that fixation time on any undirected population structure is \short{}~\cite{diaz2014approximating}.
This result enabled extensive computational exploration of the landscape of all undirected graphs that later lead to several inspiring research outputs~\cite{hindersin2015most,moller2019exploring,Allen_Lippner_Chen_Fotouhi_Momeni_Yau_Nowak_2017,goldberg2019asymptotically,tkadlec2021fast}.
It is our hope that by enabling computational exploration of population structures with some (or all) one-way connections, this work will serve the same purpose.

Studying the evolutionary dynamics in spatially structured populations is notoriously hard.
In this work we consider one of the simplest possible dynamics, namely the classic Moran Birth-death process, and we analyze the fixation time of a newly occurring mutant.
When the fixation time is exponentially long, the process is expensive to simulate, and moreover various commonly studied quantities such as fixation probability are largely irrelevant.
It is thus paramount to delineate settings in which the fixation time is ``relatively \short{}'', as opposed to being ``exceedingly long.''

It is known~\cite{diaz2014approximating} that the fixation time is fast, provided that all interactions among the individuals are two-way.
However, many relevant population structures include one-way interactions.
In meta-population dynamics, an upstream deme could seed a downstream deme. In somatic evolution, cellular differentiation could be irreversible. In an age structured population there is the one way arrow of time.

In this work, we therefore consider spatial structures in which some (or all) interactions are one-way. It is known that on such structures the fixation time can be exceedingly long \cite{diaz2016absorption}.
Nevertheless, here we present three results which indicate that fixation times on spatial structures with one-way connections are often \short{}.

First, we prove that on any population structure the fixation time is \short{}, provided that the mutant fitness advantage $r$ exceeds a certain threshold value $r^\star$ (see \cref{thm:r-infty}).
In the special case when the population structure is represented by a regular graph, the threshold value simplifies to $r^\star =1$, and we recover a known result that the fixation time on regular graphs is short for all $r\ge 1$ \cite{diaz2016absorption}.
As another corollary, for any Eulerian graph whose degrees are sandwiched between $\delta$ and $\Delta$ we can set $r^\star=\Delta/\delta$ (see \cref{thm:eulerian}).

Second, somewhat counter-intuitively we show that fixation time sometimes goes up as we increase $r$.
That is, on certain spatial structures fixation of a neutral mutant occurs faster than fixation of a mutant with a small selective advantage. This effect is called stochastic slowdown~\cite{Altrock_Gokhale_Traulsen_2010}.
We show that the magnitude of the slowdown can be bounded.
In particular, in the spirit of parametrized complexity~\cite{downey2012parameterized,cygan2015parameterized}, given a graph structure $\Graph_N$ we define a certain efficiently computable quantity $\pmin(\Graph_N)$,
and we bound the fixation time for any $r\ge 1$ from above using $\pmin(\Graph_N)$ and $N$ (see \cref{thm:tr-t1}).
This has important consequences for performing individual-based simulations that typically run the process several times and report an empirical average.
The limitation of naive individual-based simulations is that, a priori, it is not clear how much time will be needed until the simulations converge, and deciding to stop the simulations mid-way may bias the empirical average by over-representing the evolutionary trajectories that quickly go extinct.
Using \cref{thm:tr-t1}, this limitation can be circumvented by first efficiently computing an upper bound on the expected fixation time without having to simulate the process even once.

Third, we identify a class of population structures for which fixation times are \short{} for any $r\ge 1$. This class is surprisingly broad.
To start with, it includes several families of graph that had been studied in the context of Evolutionary Graph Theory earlier, such as Superstars and Metafunnels \cite{lieberman2005evolutionary}, or directed Fans \cite{allen2020transient}. 
Furthermore, the class also includes several other graph families of general interest, such as book graphs or cyclic complete multipartite graphs~\cite{Monk_van_Schaik_2022}.
Similarly, we prove that the fixation times on Megastars~\cite{galanis2017amplifiers} are \short{} for all $r\ge 1$ too.

While the focus of this work is to identify regimes and population structures that lead to \short{} fixation times, population structures with long fixation times may also be desirable, e.g.\ in conservation ecology to maintain high levels of ecological diversity~\cite{levene1953genetic,bulmer1972multiple,felsenstein1976theoretical,yeaman2011establishment,svoboda2023coexistence}.
Our results imply that spatial structures that support coexistence of two competing types on exponential time-scales all have a common feature. 
Namely, there must exist a starting node such that in the neutral regime ($r=1$), the fixation probability of a single mutant who initially appears at that node is exponentially small.
In other words, spatial structures in which a neutral mutant has a non-negligible chance of fixating, no matter where it appears, never support coexistence on long time-scales.

We note that throughout this work we considered the standard model of Moran process with Birth-death updating. A natural direction for future research is to consider related models, such as those with location-dependent fitness \cite{maciejewski2014environmental,kaveh2019environmental,svoboda2023coexistence} or those with death-Birth updating. 
It is known that in terms of fixation probabilities the Birth-death and the death-Birth processes behave quite differently \cite{sood2008voter,hadjichrysanthou2011evolutionary, hindersin2015most,tkadlec2020limits,brendborg2022fixation}.
However, in terms of fixation time they are qualitatively similar \cite{diaz2014approximating,durocher2022invasion}.

\section{Data and code availability}

Code for the figures and the computational experiments
is available from the Figshare repository: \url{https://doi.org/10.6084/m9.figshare.23802531}.

\section{Acknowledgements}
We thank Brendan McKay for helpful instructions on using \texttt{nauty}
and Salil Vadhan for insightful discussions about
random walks on directed graphs.
J.T. was supported by Center for Foundations of Modern Computer Science
(Charles Univ.\ project UNCE/SCI/004) and by the project PRIMUS/24/SCI/012 from Charles University.

\bibliographystyle{vancouver}
\bibliography{sources}

\begin{thebibliography}{10}

\bibitem{kimura1968evolutionary}
Kimura M.
\newblock Evolutionary rate at the molecular level.
\newblock Nature. 1968;217:624-6.

\bibitem{moran1958random}
Moran PAP.
\newblock Random processes in genetics.
\newblock In: Mathematical proceedings of the cambridge philosophical society.
  vol.~54. Cambridge University Press; 1958. p. 60-71.

\bibitem{ewens2004mathematical}
Ewens WJ.
\newblock Mathematical population genetics: theoretical introduction. vol.~27.
\newblock Springer; 2004.

\bibitem{durrett1994importance}
Durrett R, Levin S.
\newblock The importance of being discrete (and spatial).
\newblock Theoretical population biology. 1994;46(3):363-94.

\bibitem{whitlock2003fixation}
Whitlock MC.
\newblock Fixation probability and time in subdivided populations.
\newblock Genetics. 2003;164(2):767-79.

\bibitem{slatkin1981fixation}
Slatkin M.
\newblock Fixation probabilities and fixation times in a subdivided population.
\newblock Evolution. 1981:477-88.

\bibitem{Bonachela_Pringle_Sheffer_Coverdale_Guyton_Caylor_Levin_Tarnita_2015}
Bonachela JA, Pringle RM, Sheffer E, Coverdale TC, Guyton JA, Caylor KK, et~al.
\newblock Termite mounds can increase the robustness of dryland ecosystems to
  climatic change.
\newblock Science. 2015 Feb;347(6222):651–655.

\bibitem{Tilman_May_Lehman_Nowak_1994}
Tilman D, May RM, Lehman CL, Nowak MA.
\newblock Habitat destruction and the extinction debt.
\newblock Nature. 1994 Sep;371(64926492):65–66.

\bibitem{Comins_Hassell_May_1992}
Comins HN, Hassell MP, May RM.
\newblock The Spatial Dynamics of Host--Parasitoid Systems.
\newblock Journal of Animal Ecology. 1992;61(3):735–748.

\bibitem{Hassell_Comins_Mayt_1991}
Hassell MP, Comins HN, Mayt RM.
\newblock Spatial structure and chaos in insect population dynamics.
\newblock Nature. 1991 Sep;353(63416341):255–258.

\bibitem{Levin_1992}
Levin SA.
\newblock The Problem of Pattern and Scale in Ecology: The Robert H. MacArthur
  Award Lecture.
\newblock Ecology. 1992;73(6):1943–1967.

\bibitem{Levin_1976}
Levin SA.
\newblock Population Dynamic Models in Heterogeneous Environments.
\newblock Annual Review of Ecology and Systematics. 1976;7(1):287–310.

\bibitem{tilman1997spatial}
Tilman D, Kareiva P.
\newblock Spatial ecology: the role of space in population dynamics and
  interspecific interactions.
\newblock Princeton University Press; 1997.

\bibitem{Barton_1993}
Barton NH.
\newblock The probability of fixation of a favoured allele in a subdivided
  population.
\newblock Genetics Research. 1993 Oct;62(2):149–157.

\bibitem{Fisher_Ford_1950}
Fisher RA, Ford EB.
\newblock The "Sewall Wright effect.
\newblock Heredity. 1950;4:117–19.

\bibitem{Maruyama_1970}
Maruyama T.
\newblock Effective number of alleles in a subdivided population.
\newblock Theoretical Population Biology. 1970 Nov;1(3):273–306.

\bibitem{Nagylaki_Lucier_1980}
Nagylaki T, Lucier B.
\newblock NUMERICAL ANALYSIS OF RANDOM DRIFT IN A CLINE.
\newblock Genetics. 1980 Feb;94(2):497–517.

\bibitem{S_1932}
S W.
\newblock The roles of mutation, inbreeding, crossbreeding and selection in
  evolution.
\newblock Proceedings of the sixth international congress of Genetics.
  1932;1:356–366.

\bibitem{Wright_1931}
Wright S.
\newblock Evolution in Mendelian Populations.
\newblock Genetics. 1931 Mar;16(2):97–159.

\bibitem{Allen_Lippner_Chen_Fotouhi_Momeni_Yau_Nowak_2017}
Allen B, Lippner G, Chen YT, Fotouhi B, Momeni N, Yau ST, et~al.
\newblock Evolutionary dynamics on any population structure.
\newblock Nature. 2017 Apr;544(76497649):227–230.

\bibitem{Hauert_Doebeli_2004}
Hauert C, Doebeli M.
\newblock Spatial structure often inhibits the evolution of cooperation in the
  snowdrift game.
\newblock Nature. 2004 Apr;428(69836983):643–646.

\bibitem{Nakamaru_Matsuda_Iwasa_1997}
Nakamaru M, Matsuda H, Iwasa Y.
\newblock The Evolution of Cooperation in a Lattice-Structured Population.
\newblock Journal of Theoretical Biology. 1997 Jan;184(1):65–81.

\bibitem{Nowak_May_1992}
Nowak MA, May RM.
\newblock Evolutionary games and spatial chaos.
\newblock Nature. 1992 Oct;359(63986398):826–829.

\bibitem{Ohtsuki_Hauert_Lieberman_Nowak_2006}
Ohtsuki H, Hauert C, Lieberman E, Nowak MA.
\newblock A simple rule for the evolution of cooperation on graphs and social
  networks.
\newblock Nature. 2006 May;441(70927092):502–505.

\bibitem{Grenfell_Bjørnstad_Kappey_2001}
Grenfell BT, Bjørnstad ON, Kappey J.
\newblock Travelling waves and spatial hierarchies in measles epidemics.
\newblock Nature. 2001 Dec;414(68656865):716–723.

\bibitem{Lloyd_May_1996}
Lloyd AL, May RM.
\newblock Spatial Heterogeneity in Epidemic Models.
\newblock Journal of Theoretical Biology. 1996 Mar;179(1):1–11.

\bibitem{May_Lloyd_2001}
May RM, Lloyd AL.
\newblock Infection dynamics on scale-free networks.
\newblock Physical Review E. 2001 Nov;64(6):066112.

\bibitem{May_Nowak_1994}
May RM, Nowak MA.
\newblock Superinfection, Metapopulation Dynamics, and the Evolution of
  Diversity.
\newblock Journal of Theoretical Biology. 1994 Sep;170(1):95–114.

\bibitem{Noble_Burri_Le_Sueur_Lemant_Viossat_Kather_Beerenwinkel_2022}
Noble R, Burri D, Le~Sueur C, Lemant J, Viossat Y, Kather JN, et~al.
\newblock Spatial structure governs the mode of tumour evolution.
\newblock Nature Ecology \& Evolution. 2022 Feb;6(22):207–217.

\bibitem{Waclaw_Bozic_Pittman_Hruban_Vogelstein_Nowak_2015}
Waclaw B, Bozic I, Pittman ME, Hruban RH, Vogelstein B, Nowak MA.
\newblock A spatial model predicts that dispersal and cell turnover limit
  intratumour heterogeneity.
\newblock Nature. 2015 Sep;525(75687568):261–264.

\bibitem{Barton_1979}
Barton NH.
\newblock The dynamics of hybrid zones.
\newblock Heredity. 1979 Dec;43(33):341–359.

\bibitem{Mitchell_1974}
Mitchell JC.
\newblock Social Networks.
\newblock Annual Review of Anthropology. 1974;3(1):279–299.

\bibitem{lieberman2005evolutionary}
Lieberman E, Hauert C, Nowak MA.
\newblock Evolutionary dynamics on graphs.
\newblock Nature. 2005;433(7023):312-6.

\bibitem{diaz2021survey}
D{\'\i}az J, Mitsche D.
\newblock A survey of the modified Moran process and evolutionary graph theory.
\newblock Computer Science Review. 2021;39:100347.

\bibitem{donnelly1983finite}
Donnelly P, Welsh D.
\newblock Finite particle systems and infection models.
\newblock In: Mathematical Proceedings of the Cambridge Philosophical Society.
  vol.~94. Cambridge University Press; 1983. p. 167-82.

\bibitem{hindersin2015most}
Hindersin L, Traulsen A.
\newblock Most undirected random graphs are amplifiers of selection for
  birth-death dynamics, but suppressors of selection for death-birth dynamics.
\newblock PLoS computational biology. 2015;11(11):e1004437.

\bibitem{broom2010evolutionary}
Broom M, Hadjichrysanthou C, Rycht{\'a}{\v{r}} J.
\newblock Evolutionary games on graphs and the speed of the evolutionary
  process.
\newblock Proceedings of the Royal Society A: Mathematical, Physical and
  Engineering Sciences. 2010;466(2117):1327-46.

\bibitem{broom2010two}
Broom M, Hadjichrysanthou C, Rycht{\'a}{\v{r}} J, Stadler B.
\newblock Two results on evolutionary processes on general non-directed graphs.
\newblock Proceedings of the Royal Society A: Mathematical, Physical and
  Engineering Sciences. 2010;466(2121):2795-8.

\bibitem{monk2020wald}
Monk T, van Schaik A.
\newblock Wald’s martingale and the conditional distributions of absorption
  time in the Moran process.
\newblock Proceedings of the Royal Society A. 2020;476(2241):20200135.

\bibitem{monk2021martingales}
Monk T, van Schaik A.
\newblock Martingales and the characteristic functions of absorption time on
  bipartite graphs.
\newblock Royal Society Open Science. 2021;8(10):210657.

\bibitem{nowak2003linear}
Nowak MA, Michor F, Iwasa Y.
\newblock The linear process of somatic evolution.
\newblock Proceedings of the National Academy of Sciences.
  2003;100(25):14966-9.

\bibitem{nowak2006evolutionary}
Nowak MA.
\newblock Evolutionary dynamics: exploring the equations of life.
\newblock Harvard University Press; 2006.

\bibitem{adlam2015amplifiers}
Adlam B, Chatterjee K, Nowak MA.
\newblock Amplifiers of selection.
\newblock Proceedings of the Royal Society A: Mathematical, Physical and
  Engineering Sciences. 2015;471(2181):20150114.

\bibitem{hadjichrysanthou2011evolutionary}
Hadjichrysanthou C, Broom M, Rycht{\'a}r J.
\newblock Evolutionary games on star graphs under various updating rules.
\newblock Dynamic Games and Applications. 2011;1(3):386-407.

\bibitem{monk2014martingales}
Monk T, Green P, Paulin M.
\newblock Martingales and fixation probabilities of evolutionary graphs.
\newblock Proceedings of the Royal Society A: Mathematical, Physical and
  Engineering Sciences. 2014;470(2165):20130730.

\bibitem{chalub2014asymptotic}
Chalub FA.
\newblock Asymptotic expression for the fixation probability of a mutant in
  star graphs.
\newblock arXiv preprint arXiv:14043944. 2014.

\bibitem{pavlogiannis2017amplification}
Pavlogiannis A, Tkadlec J, Chatterjee K, Nowak MA.
\newblock Amplification on undirected population structures: comets beat stars.
\newblock Scientific reports. 2017;7(1):82.

\bibitem{diaz2013fixation}
D{\'\i}az J, Goldberg LA, Mertzios GB, Richerby D, Serna M, Spirakis PG.
\newblock On the fixation probability of superstars.
\newblock Proceedings of the Royal Society A: Mathematical, Physical and
  Engineering Sciences. 2013;469(2156):20130193.

\bibitem{jamieson2015fixation}
Jamieson-Lane A, Hauert C.
\newblock Fixation probabilities on superstars, revisited and revised.
\newblock Journal of Theoretical Biology. 2015;382:44-56.

\bibitem{galanis2017amplifiers}
Galanis A, G{\"o}bel A, Goldberg LA, Lapinskas J, Richerby D.
\newblock Amplifiers for the Moran process.
\newblock Journal of the ACM (JACM). 2017;64(1):1-90.

\bibitem{tkadlec2019population}
Tkadlec J, Pavlogiannis A, Chatterjee K, Nowak MA.
\newblock Population structure determines the tradeoff between fixation
  probability and fixation time.
\newblock Communications biology. 2019;2(1):138.

\bibitem{pavlogiannis2018construction}
Pavlogiannis A, Tkadlec J, Chatterjee K, Nowak MA.
\newblock Construction of arbitrarily strong amplifiers of natural selection
  using evolutionary graph theory.
\newblock Communications biology. 2018;1(1):71.

\bibitem{tkadlec2021fast}
Tkadlec J, Pavlogiannis A, Chatterjee K, Nowak MA.
\newblock Fast and strong amplifiers of natural selection.
\newblock Nature Communications. 2021;12(1):4009.

\bibitem{sharma2022suppressors}
Sharma N, Traulsen A.
\newblock Suppressors of fixation can increase average fitness beyond
  amplifiers of selection.
\newblock Proceedings of the National Academy of Sciences.
  2022;119(37):e2205424119.

\bibitem{cormen2022introduction}
Cormen TH, Leiserson CE, Rivest RL, Stein C.
\newblock Introduction to algorithms.
\newblock MIT press; 2022.

\bibitem{diaz2014approximating}
D{\'\i}az J, Goldberg LA, Mertzios GB, Richerby D, Serna M, Spirakis PG.
\newblock Approximating fixation probabilities in the generalized moran
  process.
\newblock Algorithmica. 2014;69:78-91.

\bibitem{goldberg2020phase}
Goldberg LA, Lapinskas J, Richerby D.
\newblock Phase transitions of the Moran process and algorithmic consequences.
\newblock Random Structures \& Algorithms. 2020;56(3):597-647.

\bibitem{diaz2016absorption}
D{\'\i}az J, Goldberg LA, Richerby D, Serna M.
\newblock Absorption time of the Moran process.
\newblock Random Structures \& Algorithms. 2016;49(1):137-59.

\bibitem{ibsen2015computational}
Ibsen-Jensen R, Chatterjee K, Nowak MA.
\newblock Computational complexity of ecological and evolutionary spatial
  dynamics.
\newblock Proceedings of the National Academy of Sciences.
  2015;112(51):15636-41.

\bibitem{Altrock_Gokhale_Traulsen_2010}
Altrock PM, Gokhale CS, Traulsen A.
\newblock Stochastic slowdown in evolutionary processes.
\newblock Physical Review E. 2010 Jul;82(1):011925.

\bibitem{maciejewski2014reproductive}
Maciejewski W.
\newblock Reproductive value in graph-structured populations.
\newblock Journal of Theoretical Biology. 2014;340:285-93.

\bibitem{monk2018martingales}
Monk T.
\newblock Martingales and the fixation probability of high-dimensional
  evolutionary graphs.
\newblock Journal of theoretical biology. 2018;451:10-8.

\bibitem{Monk_van_Schaik_2022}
Monk T, van Schaik A.
\newblock Martingales and the fixation time of evolutionary graphs with
  arbitrary dimensionality.
\newblock Royal Society Open Science. 2022 May;9(5):220011.

\bibitem{allen2020transient}
Allen B, Sample C, Jencks R, Withers J, Steinhagen P, Brizuela L, et~al.
\newblock Transient amplifiers of selection and reducers of fixation for
  death-Birth updating on graphs.
\newblock PLoS computational biology. 2020;16(1):e1007529.

\bibitem{mckey1990nauty}
McKay BD, Piperno A.
\newblock Practical graph isomorphism, II.
\newblock Journal of symbolic computation. 2014;60:94-112.

\bibitem{moller2019exploring}
M{\"o}ller M, Hindersin L, Traulsen A.
\newblock Exploring and mapping the universe of evolutionary graphs identifies
  structural properties affecting fixation probability and time.
\newblock Communications biology. 2019;2(1):137.

\bibitem{goldberg2019asymptotically}
Goldberg LA, Lapinskas J, Lengler J, Meier F, Panagiotou K, Pfister P.
\newblock Asymptotically optimal amplifiers for the Moran process.
\newblock Theoretical Computer Science. 2019;758:73-93.

\bibitem{downey2012parameterized}
Downey RG, Fellows MR.
\newblock Parameterized complexity.
\newblock Springer Science \& Business Media; 2012.

\bibitem{cygan2015parameterized}
Cygan M, Fomin FV, Kowalik {\L}, Lokshtanov D, Marx D, Pilipczuk M, et~al.
\newblock Parameterized algorithms. vol.~4.
\newblock Springer; 2015.

\bibitem{levene1953genetic}
Levene H.
\newblock Genetic equilibrium when more than one ecological niche is available.
\newblock The American Naturalist. 1953;87(836):331-3.

\bibitem{bulmer1972multiple}
Bulmer M.
\newblock Multiple niche polymorphism.
\newblock The American Naturalist. 1972;106(948):254-7.

\bibitem{felsenstein1976theoretical}
Felsenstein J.
\newblock The theoretical population genetics of variable selection and
  migration.
\newblock Annual review of genetics. 1976;10(1):253-80.

\bibitem{yeaman2011establishment}
Yeaman S, Otto SP.
\newblock Establishment and maintenance of adaptive genetic divergence under
  migration, selection, and drift.
\newblock Evolution. 2011;65(7):2123-9.

\bibitem{svoboda2023coexistence}
Svoboda J, Tkadlec J, Kaveh K, Chatterjee K.
\newblock Coexistence times in the Moran process with environmental
  heterogeneity.
\newblock Proceedings of the Royal Society A. 2023;479(2271):20220685.

\bibitem{maciejewski2014environmental}
Maciejewski W, Puleo GJ.
\newblock Environmental evolutionary graph theory.
\newblock Journal of theoretical biology. 2014;360:117-28.

\bibitem{kaveh2019environmental}
Kaveh K, McAvoy A, Nowak MA.
\newblock Environmental fitness heterogeneity in the Moran process.
\newblock Royal Society open science. 2019;6(1):181661.

\bibitem{sood2008voter}
Sood V, Antal T, Redner S.
\newblock Voter models on heterogeneous networks.
\newblock Physical Review E. 2008;77(4):041121.

\bibitem{tkadlec2020limits}
Tkadlec J, Pavlogiannis A, Chatterjee K, Nowak MA.
\newblock Limits on amplifiers of natural selection under death-Birth updating.
\newblock PLoS computational biology. 2020;16(1):e1007494.

\bibitem{brendborg2022fixation}
Brendborg J, Karras P, Pavlogiannis A, Rasmussen AU, Tkadlec J.
\newblock Fixation maximization in the positional moran process.
\newblock In: Proceedings of the AAAI Conference on Artificial Intelligence.
  vol.~36; 2022. p. 9304-12.

\bibitem{durocher2022invasion}
Durocher L, Karras P, Pavlogiannis A, Tkadlec J.
\newblock Invasion dynamics in the biased voter process.
\newblock In: Proceedings of the Thirty-First International Joint Conference on
  Artificial Intelligence; 2022. p. 265-71.

\bibitem{kotzing2019first}
K{\"o}tzing T, Krejca MS.
\newblock First-hitting times under drift.
\newblock Theoretical Computer Science. 2019;796:51-69.

\end{thebibliography}

\newpage

\appendix
\section*{Appendices}
This is a supplementary information to the manuscript \textit{Fixation times on directed graphs}. 
It contains formal proofs of the theorems listed in the main text.
In \cref{sec:prelims} we formally introduce the model and recall results that we will use in our proofs.
In the following sections we prove Theorems 1 to 5 from the main text.

\section{Preliminaries}\label{sec:prelims}
\subsection{Moran process on a graph}
A \emph{graph} $\Graph$ is a tuple $(\Vertices, \Edges)$
where $\Vertices$ is a set of \emph{vertices} and $\Edges$ is a set of \emph{edges} between vertices.
Unless otherwise specified, graphs are directed, unweighted, strongly connected, do not have self-loops,
and do not have multiple edges between vertices.
The \emph{outdegree} of a vertex $v$, denoted $\deg^+(v)$, is the number of outgoing edges, and
its \emph{indegree}, denoted $\deg^-(v)$, is the number of incoming edges. When the outdegree equals the indegree, we call it the \emph{degree} and denote it $\deg(v)$. 
Given a graph, its population size is $N := |\Vertices|$.
To emphasize the population size of a graph $\Graph$, we sometimes
refer to it as $\Graph_N$.
To model the evolutionary dynamics we consider the Moran Birth-death processes on graphs.
In this process, each vertex has an associated \emph{type} at every time step: it is either a \emph{resident} or a \emph{mutant}.
Each type has an associated reproductive fitness: residents have fitness 1, mutants have fitness $r\ge 1$.

The \emph{Moran Birth-death process} is defined as follows: At each time step we
\begin{enumerate}
    \item Pick a random vertex $u\in\Vertices$ proportional to its type's reproductive fitness.
    \item Pick an outgoing edge of $u$ uniformly at random; denote its endpoint $v\in\Vertices$.
    \item Update the type of $v$ to be the type of $u$.
\end{enumerate}
Formally, the process is represented as a sequence $X_0,X_1,X_2,\ldots \subseteq \Vertices$, where $X_0$ is the set of vertices that are initially occupied by mutants, and $X_t$ is the set of vertices occupied by mutants after $t$ steps. The set of vertices occupied by mutants is called a \textit{(mutant) configuration}.
If $\Graph$ is finite and strongly connected, then with probability one eventually all vertices are occupied by mutants (if $X_t=\Vertices$ for some $t$, we say that the mutants \textit{fixated}) or all vertices are occupied by residents (if $X_t=\emptyset$ for some $t$, we say that the mutants \textit{went extinct}).

\subsection{Fixation probability and time}
We define some relevant quantities when studying evolutionary dynamics in a structured population.
Each quantity depends 
on the mutant fitness advantage $r\ge 1$; 
on the underlying population structure, represented by a graph $\Graph_N$; and
on the set $X_0$ of vertices that are occupied by mutants.

The \emph{fixation probability}, denoted $\fp:=\fp_r(\Graph,S)$, is the probability that the mutants eventually take over the population forever
starting from a set of mutants occupying vertices $S\subseteq\Vertices$.
In contrast,
the \emph{extinction probability} is the probability that the mutants eventually die out forever.
If the graph is strongly connected, fixation or extinction will happen with probability 1 in finitely many expected steps.
When the processes reaches fixation or extinction, we say that the process has \emph{absorbed}.
We can also study the time it takes for each of these events to happen:
the \emph{expected absorption time} $\ATr(\Graph,S)$ is the expected amount of time steps until the process absorbs.
The \emph{expected fixation time} $\FTr(\Graph,S)$ is the expected amount of time steps
conditional on fixation occurring.
We also define $\FTr(\Graph)=\max_{S\subseteq\Vertices, S\ne\emptyset} \{\FTr(\Graph,S)\}$ as the slowest possible fixation time, across all possible initial conditions.
Similarly, we define the \textit{expected extinction time}
$\ETr(G,S)$
as the average number of steps over those trajectories that eventually lead to mutant extinction.
When the mutant fitness advantage $r$, the underlying graph $\Graph$, or the initial condition $S$ are clear from the context we omit it.
Further, for a $u\in\Vertices$, we sometimes denote $\fp_r(\Graph,\{u\})$
as $\fp_r(\Graph,u)$ for notational ease.

We note that there are two types of steps in the Moran process, namely the \textit{active steps} in which the configuration changes, and the \textit{waiting steps} in which it stays the same. The absorption, fixation, and extinction times defined above count both the active steps and the waiting steps.

Finally, we say that if the expected absorption or fixation time is bounded from above by some polynomial in terms of the population size $N$ then the time is \emph{fast}; otherwise the time is \emph{slow}.

\subsection{Forward bias lemma}
In the rest of this section, we define the notion of a forward bias and present a lemma that proves useful in deriving~\cref{thm:r-infty,thm:eulerian}.

Given a strongly connected directed graph $\Graph_N=(\Vertices,\Edges)$, a mutant fitness advantage $r\geq 1$, and a nonempty set $S\subsetneq \Vertices$ of nodes currently occupied by mutants, we define the \textit{up-probability} $\pup(S)$ as the probability that, in a single step of the Moran process, the number of mutants increases.
In other words, $\pup(S)$ is the probability that a mutant is selected for reproduction and its offspring replaces a resident neighbor.
Likewise, we define the \textit{down-probability} $\pdown(S)$ as the probability that the number of mutants decreases.
Similarly to above, $\pdown(S)$ is the probability that a resident is selected for reproduction and its offspring replaces a mutant neighbor.
Since the graph $\Graph_N$ is strongly connected and $S\not\in\{\emptyset,\Vertices\}$ then both $\pup(S)$ and $\pdown(S)$ are non-zero.
In that case, we define the \textit{bias at $S$} as the ratio $\bias(S)=\pup(S)/\pdown(S)$.

\begin{lemma}\label{lem:key}
Let $\Graph_N=(\Vertices,\Edges)$ be a strongly connected directed graph with $N$ nodes, 
$S\subsetneq \Vertices$ the nonempty set of nodes initially occupied by mutants,
 and $r\ge 1$ the mutant fitness advantage.
Suppose that there exists a real number $f>1$ such that for any nonempty subset $U\subsetneq \Vertices$ of nodes we have $\bias(U)\ge f$. Then
\begin{enumerate}
\item\label{item:key-1} $\fpr(\Graph_N,S)\ge 1-1/f$,
\item\label{item:key-2} $\ATr(\Graph_N,S)\le \frac{f+1}{f-1}\cdot N^3$,
\item\label{item:key-3}$\FTr(\Graph_N,S)\le \frac{\ATr(\Graph_N,S)}{\fpr(\Graph_N,S)}\le \frac{f(f+1)}{(f-1)^2}\cdot N^3$.
\end{enumerate}
Further, for any $\e\in(0,1)$ and for parameters $r$, $\Graph_N$,
 and $S$,
the amount of time until absorption given the process fixates is at most $\frac{f(f+1)}{(f-1)^2}\cdot \frac{N^3}\e$ with probability at least $1-\e$.
\end{lemma}

To prove~\cref{lem:key} we need the following form of Markov's inequality.
\begin{lemma}[Conditional Markov's inequality]\label{lem:markov}
    Let $X$ be an almost surely nonnegative random variable.
    Let $a>0$ and let $\mathcal E$ be an event.
    Then
   \begin{equation}
       \Pr[X>a\mid \mathcal E]\leq \E{X/a\mid \mathcal E}.
   \end{equation} 
\end{lemma}
\begin{proof}
    Similar to a common proof of Markov's inequality, notice $a\cdot\mathbbm1_{\{X> a\}}\cdot\mathbbm1_{\{\mathcal E\}}\leq X\cdot\mathbbm1_{\{\mathcal E\}}$. Taking the expectation of both sides
    yields the result.
\end{proof}

We proceed to prove~\cref{lem:key}.
\begin{proof}[Proof of \cref{lem:key}]
The key idea is to project the process to a one-dimensional random walk by tracking the number of nodes occupied by mutants. Formally, the random walk is a Markov chain $W$ with states $s_0,s_1,\dots,s_N$ (where state $s_k$ corresponds to those mutant configurations with precisely $k$ mutants), and with transition probabilities $\Pr[s_k\to s_{k+1}]=\frac{f_k}{f_k+1}$, $\Pr[s_k\to s_{k-1}]=\frac{1}{f_k+1}$, where $f_k=\min\{\bias(S)\colon |S|=k\}$ is the smallest forward bias among all the mutant configurations with precisely $k$ mutants.
For $k=0,\ldots,N$, let $p_k$ be the probability that $W$ starting at $s_k$ eventually reaches $s_N$ (as opposed to reaching $s_0$).
Note that the random walk $W$ models only the active steps of the Moran process (that is, the steps in which the mutant configuration changes).
Moreover, at each configuration it always assumes the lowest possible forward bias.
Let $i=|S|$ be the number of nodes initially occupied by mutants.
Thus, we have $\fpr(\Graph_N,S)\ge p_i$.

\medskip
First, we prove~\cref{item:key-1}.
A standard formula for one-dimensional Markov chains (see e.g.~\cite[Section 6.2]{nowak2006evolutionary} 
immediately yields the desired
\begin{align*}
p_i \geq p_1 = \frac1{1+\sum_{j=1}^{N-1}\prod_{k=1}^j \frac1{f_k}} \ge \frac1{\sum_{j=0}^{N-1} (1/f)^j}\ge \frac1{\sum_{j=0}^\infty (1/f)^j} = 1-1/f.
\end{align*}

\medskip
Next, we prove~\cref{item:key-2}.
We deal with the active steps and the waiting steps separately.

Regarding the waiting steps, consider any nonempty mutant configuration $S\subsetneq \Vertices$ with $1\le k\le N-1$ mutants and let $F=k\cdot r+(N-k)\cdot 1\le r\cdot N$ be the total fitness of the population. 
Since $\Graph_N$ is strongly connected, there is at least one edge going from a mutant node to a resident node.
Since $r\ge 1$, the probability $p_a(S)$ that the next step is active satisfies $p_a(S)\ge \frac{r}{F}\cdot \frac1{N-1}\ge \frac1N\cdot \frac1N =1/N^2$.
Thus, at any configuration, the expected number of steps until an active step occurs is at most $N^2$.
Therefore, in order to get an upper bound on the absorption time (which includes both the active and the waiting steps), it suffices to count only the active steps and then multiply the result by $N^2$.

To count the active steps, consider the corresponding random walk $W$. Given $1\le k\le N-1$, let $x_k$ be the expected number of times the state $s_k$ is visited in $W$. We will prove that $x_k\le \frac{f+1}{f-1}$ for each $1\le k\le N-1$.

To that end, consider a walk $W$ currently at $s_k$.
With probability $\frac{f}{f+1}$ it next moves to $s_{k+1}$. 
Once in $s_{k+1}$, by~\cref{item:key-1} the walk reaches $s_N$ before reaching $s_k$ with probability at least $p_1\ge 1-1/f$.
Thus, any time the walk is at $s_k$, with probability at least $\frac{f}{f+1}\cdot \frac{f-1}{f}=\frac{f-1}{f+1}$ it never comes back. Therefore, $x_k\le 1/\frac{f-1}{f+1} =\frac{f+1}{f-1}$. This is true for any of the $N-1$ states $s_1,\dots,s_{N-1}$, so the expected number of active steps is at most $\frac{f+1}{f-1}\cdot N$ and the expected number of all steps (including the waiting steps) is at most
$\ATr(\Graph_N,S)\le N^2\cdot\left(\frac{f+1}{f-1}\cdot N\right) = \frac{f+1}{f-1}\cdot N^3$.

\medskip

Finally, we prove~\cref{item:key-3}. By linearity of expectation we have
\[
\ATr(\Graph_N,S) = \FTr(\Graph_N,S)\cdot \fpr(\Graph_N,S) + \ETr(\Graph_N,S)\cdot \left(1-\fpr(\Graph_N,S)\right),
\]
where $\ETr(\Graph_N,S)$ is the conditional extinction time, that is, the average length of those stochastic trajectories that terminate with the mutation going extinct. Applying a trivial bound $\ETr(\Graph_N,S)\ge 0$ and~\cref{item:key-1}, we get
\begin{equation}\label{eq:fixation-through-absorption}
\ATr(\Graph_N,S) \ge \FTr(\Graph_N,S)\cdot \fpr(\Graph_N,S) \ge \FTr(G_n,S)\cdot (1-1/f).
\end{equation}
Putting this together with~\cref{item:key-2} gives the desired
\[
\FTr(G_n,S)\le \frac{f}{f-1}\cdot \ATr(\Graph_N,S)\le \frac{f(f+1)}{(f-1)^2}\cdot N^3.
\]

 For the last statement of the lemma, we apply \cref{lem:markov}
to the time a trajectory takes to absorb as $X$, the event the process fixates as $\mathcal E$,
and $a = \e^{-1}\cdot \E{X\mid \mathcal E}$ using \cref{item:key-3}.

\end{proof}

%




\section{Fixation always occurs quickly when selection advantage is strong enough}\label{sec:r-infty}

In this section we prove Theorem 1 from the main text.
That is, we show that if the mutant fitness advantage is large enough then the process terminates fast, regardless of the underlying spatial structure.

\begin{theorem}\label{thm:r-infty} Let $\Graph_N$ be an arbitrary graph on $N$ nodes.
Suppose that $r\ge N^2$.
Then $\ATr(\Graph_N)\le 2\cdot N^3$ and  $\FTr(\Graph_N)\le 3\cdot N^3$.
\end{theorem}
\begin{proof} Suppose mutants currently occupy a nonempty set $S\subsetneq \Vertices$ of nodes.
Let $k=|S|$ and denote by $F=k\cdot r+ (N-k)\cdot 1$ the total fitness of the population.

Since $\Graph_N$ is strongly connected, there is at least one mutant node $u$ with a resident neighbor.
We claim that if $u$ is selected for reproduction, then it replaces a resident with probability at least $1/k$.
We distinguish two cases based on the outdegree $d$ of $u$.
\begin{enumerate}
\item If $d\le k$ then the claimed probability is at least $\frac1d\ge \frac1k$.
\item If $d\ge k$ then $u$ must have at least $d-(k-1)$ resident neighbors (since there are at most $k-1$ other mutant nodes altogether). Thus the claimed probability is at least $\frac{d-k+1}{d}\ge \frac1k$, where the inequality is equivalent with $(d-k)(k-1)\ge 0$ which holds trivially.
\end{enumerate}
Altogether, node $u$ gets selected with probability at least $r/F$, thus we have:
\[\pup(S)\ge \frac{r}{F}\cdot \frac1k.
\]

On the other hand, since there are $N-k$ residents and each, when selected for reproduction, replaces a mutant with probability at most 1, we have
\[\pdown(S)\le \frac{N-k}{F}.
\]
Combining those two bounds, we get
\[ \bias(S)\ge \frac{r}{k(N-k)} \ge \frac{r}{N^2/4}\ge 4,
\]
where we have used an AM-GM inequality for $k$ and $N-k$. Thus, \cref{lem:key} applies with $f=4$ and we get the desired  $\ATr(\Graph_N)\le \frac53 N^3<2N^3$ and  $\FTr(\Graph_N)\le \frac{20}{9}N^3<3N^3$.
\end{proof}

\section{Fixation occurs quickly on Eulerian graphs}\label{sec:eulerian}
In this section we prove Theorem 2 from the main text.
That is, we show that if a graph is Eulerian with degrees sandwiched between $\delta$ and $\Delta$ then the time is \short{}, provided that the mutant fitness advantage satisfies $r> \Delta/\delta$.
Recall that a graph is \emph{Eulerian} if $\InDegree{v}=\OutDegree{v}$ holds for each $v\in\Vertices$.

First, we point out one useful property of such graphs.
Let $S\subseteq \Vertices$.
Let $E^+(S)$ be the set of the edges whose starting vertex is in $S$.
Let $E^-(S)$ be the set of the edges whose ending vertex is in $S$.
Let $m^+_S$ be the number edges outgoing from $S$
and incoming to $\Vertices\setminus S$.
Similarly, let $m^-_S$ be the number edges outgoing from $\Vertices\setminus S$
and incoming to $S$.

\begin{lemma}\label{lemma:eulerian-cross-edges}
A graph $\Graph=(\Vertices,\Edges)$ is Eulerian
if and only if $m^+_S = m^-_S$ for every $S\subseteq\Vertices$.
\end{lemma}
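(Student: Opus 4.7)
The plan is to prove the two implications of this biconditional separately, with the forward direction resting on a standard double-counting identity and the backward direction following immediately by specialising to singletons.

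For the forward direction, I would assume $\Graph$ is Eulerian and fix an arbitrary $S\subseteq \Vertices$. The key observation is that summing the out-degrees over $S$ counts every edge with tail in $S$ exactly once, which splits naturally as the edges going from $S$ to $\Vertices\setminus S$ (contributing $m^+_S$) plus the edges whose head is also in $S$ (the ``internal'' edges). Likewise, summing in-degrees over $S$ decomposes as $m^-_S$ plus the same internal edges. Since $\OutDegree{v}=\InDegree{v}$ for every $v$, the two sums are equal, and the internal edge count cancels, leaving $m^+_S=m^-_S$. This is essentially the directed-graph analogue of the handshake argument.

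For the backward direction, I would instantiate the hypothesis at $S=\Set{v}$ for each vertex $v\in\Vertices$. Because the graph has no self-loops, every edge with tail $v$ has head in $\Vertices\setminus\Set{v}$, so $m^+_{\Set{v}}=\OutDegree{v}$; symmetrically $m^-_{\Set{v}}=\InDegree{v}$. The hypothesis then yields $\OutDegree{v}=\InDegree{v}$ for each $v$, i.e., $\Graph$ is Eulerian.

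I do not expect any real obstacle here: the statement is purely combinatorial and both directions reduce to bookkeeping of edge endpoints. The only minor care needed is to remember the standing assumption (from the Preliminaries) that graphs have no self-loops, so that the singleton computation $m^+_{\Set{v}}=\OutDegree{v}$ is clean; otherwise, one would have to subtract self-loops from both sides, which still gives the conclusion but is cosmetically noisier.
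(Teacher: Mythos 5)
Your proof is correct and follows essentially the same route as the paper: the forward direction via the decomposition of $\sum_{u\in S}\OutDegree{u}$ and $\sum_{u\in S}\InDegree{u}$ into cross-edges plus the common internal-edge count, and the backward direction by specialising to singletons. The remark about self-loops is a nice touch but otherwise nothing differs in substance.
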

\begin{proof}
    Suppose $\Graph$ is Eulerian and let $S\subseteq\Vertices$.
    We have that
    
    \begin{align*}
        m^+_S &= |\Edges^+(S)|-\#\{u\to v\in\Edges\mid u\in S, v\in S\} \text{ \quad and} \\
       m^-_S &= |\Edges^-(S)|-\#\{u\to v\in\Edges\mid u\in S, v\in S\}.
    \end{align*}
    Noticing that \[ |\Edges^+(S)|=\sum_{u\in S}\OutDegree{u}=\sum_{u\in S}\InDegree{u}=|\Edges^-(S)| \]
    since $\Graph$ is Eulerian completes one direction of the proof.

    For the other direction, suppose we know that $m^+(S)=m^-(S)$ for each $S\subseteq\Vertices$.
    Then for every $u\in\Vertices$ we know $\InDegree{u} = m^-_{\{u\}} =m^+_{\{u\}} =\OutDegree{u}$, so $\Graph$ is Eulerian.
\end{proof}


\begin{theorem}\label{thm:eulerian}
    Let $\Graph_N$ be an Eulerian graph on $N$ nodes
    with smallest degree $\delta$ and largest degree $\Delta$.
    Suppose that $r\ge \frac{\Delta}{\delta}\cdot (1+\eps)$ for some $\eps>0$.
    Then
    $\ATr(\Graph_N)\le \frac{2+\eps}{\eps}\cdot N^3$ and $\FTr(\Graph_N)\le \frac{(1+\eps)(2+\eps)}{\eps^2}\cdot N^3$. 
\end{theorem}
\begin{proof}
Suppose mutants currently occupy a nonempty set $S\subsetneq \Vertices$ of nodes.
Let $E^+=\{ (u,v)\in \Edges\mid u\in S, v\not\in S\}|$ be the set of those edges that go from a mutant to a resident.
Likewise, let $E^-=\{ (u,v)\in \Edges\mid u\not\in S, v\in S\}$ be the set of those edges that go from a resident to a mutant.
By \cref{lemma:eulerian-cross-edges} we know that the two sets $E^+$ and $E^-$ have the same size, denote it by $s$.
Denoting by $F=|S|\cdot r + (N-|S|)\cdot 1$ the total fitness of the population, we have
\[
\pup(S)\ge \sum_{(u,v)\in E^+} \frac{r}{F}\cdot \frac{1}{\deg(u)} \ge s\cdot \frac{r}{F\Delta}
\]
and
\[
\pdown(S)\le \sum_{(u,v)\in E^-} \frac{1}{F}\cdot \frac{1}{\deg(u)} \le s\cdot \frac{1}{F\delta}.
\]
Thus
\[\bias(S)\ge \frac{r\delta}{\Delta} = 1+\eps,\]
hence \cref{lem:key} applies with $f=1+\eps$.
The claims follow by straightforward algebra.
\end{proof}

\section{Fixation can occur slightly faster for small selective advantages}\label{sec:tr-t1}

In this section we prove Theorem 3 from the main text.
That is, we bound the fixation time on any graph $\Graph_N$ with any mutant fitness advantage $r\ge 1$ in terms of the minimum fixation probability $\pmin := \min_{S\subseteq\Vertices,S\neq\emptyset}\fp_{r=1}(\Graph,S)$.
The proof has two ingredients.
As our first ingredient, we bound the fixation time in the neutral regime ($r=1$). 
To that end, we first recall a standard lemma.

\begin{lemma}[Corollary 26 of \cite{kotzing2019first}, martingale upper additive drift]\label{lemma:first-hitting-times}
Let $Z_0,Z_1,Z_2,\ldots$ be random variables over
$[\alpha, \beta] \subseteq \mathbb R$,
and let $\tau := \inf \{t \geq 0 \mid Z_t \in \{\alpha, \beta\}\}$.
Furthermore, suppose that,
\begin{enumerate}
    \item $\E{Z_{t+1} \mid Z_0,\ldots, Z_t} = Z_t$ for all $t < \tau$, and
    \item there is some $\delta>0$ such that
    $\Var{Z_{t+1} | Z_0,\ldots, Z_t} \geq \delta$
    for all $t < \tau$.
\end{enumerate}
Then
\begin{equation}
    \E{\tau\mid Z_0} \leq \frac{(Z_0-\alpha)(\beta-Z_0)}{\delta}.
\end{equation}
\end{lemma}


\cref{lemma:first-hitting-times} allows us to bound the absorption time and the fixation time in the neutral regime ($r=1$). 

\begin{lemma}[Bounding absorption time and fixation time when $r=1$]\label{lemma:fixation-time-through-absorption-time}
    Let $\Graph=(\Vertices,\Edges)$. Then
    \[
    \AbsorptionTime_{r=1}(\Graph)\leq \frac{N^4}{4\pmin^2}
    \quad\text{and}\quad
    \T_{r=1}(\Graph) \leq \frac{N^4}{4\pmin^3}.
    \]
\end{lemma}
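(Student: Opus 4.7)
The plan is to first establish the absorption bound $\AT^{r=1}(\Graph) \le N^4/(4\pmin^2)$, and then deduce the fixation bound from it. The fixation bound follows by a one-line conditioning argument: for any $S \ne \emptyset$ we have
\[
\T^{r=1}(\Graph,S) \cdot \fp^{r=1}(\Graph,S) = \E{\tau \mid \text{fix}, X_0 = S} \cdot \Pr[\text{fix} \mid X_0 = S] \le \E{\tau \mid X_0 = S} = \AT^{r=1}(\Graph,S),
\]
so $\T^{r=1}(\Graph,S) \le \AT^{r=1}(\Graph,S)/\fp^{r=1}(\Graph,S) \le \AT^{r=1}(\Graph)/\pmin$, which upgrades any absorption bound by a factor of $1/\pmin$.

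For the absorption bound, I take $\mu(S) := \fp^{r=1}(\Graph,S)$ as the potential. Since fixation probability is, by definition, a harmonic function on the state space, $\mu(X_t)$ is automatically a zero-drift martingale. Applying \cref{lemma:drift-second-moment-exactly-zero} reduces the task to showing $\Var{\mu(X_{t+1}) \mid X_t=S} \ge \pmin^2/N^4$ uniformly over non-trivial $S$; combined with $\mu(S)(1-\mu(S)) \le 1/4$ this yields the claimed bound.

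To lower-bound this one-step variance I invoke \cref{lemma:variance-lower-bound-zero-drift}. For the probability $\nu_t$ that $\mu$ changes: by strong connectivity there is at least one ``cross-edge'' $u \to v$ with $u \in S$ and $v \notin S$, and in one step it is selected with probability at least $1/(N \deg(u)) \ge 1/N^2$, so $\nu_t \ge 1/N^2$. For the minimum jump size $\ell_t$, I appeal to the \emph{additivity of the neutral fixation probability},
\[
\fp^{r=1}(\Graph,S) = \sum_{v \in S} \fp^{r=1}(\Graph,\{v\}),
\]
a classical fact provable via the coalescent dual of the neutral Moran BD process: tracing reproduction events backwards, all vertices eventually share the type of a unique time-$0$ ancestor $v^{\ast}$ whose law is some distribution $\pi$ on $\Vertices$, and fixation from $S$ occurs iff $v^{\ast} \in S$; taking $S=\{v\}$ identifies $\pi(v) = \fp^{r=1}(\Graph,\{v\})$. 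Additivity then implies that whenever $X_{t+1} = X_t \triangle \{v\}$ (the only type of single-step change), $\mu$ moves by exactly $\pm \fp^{r=1}(\Graph,\{v\})$, which has absolute value at least $\pmin$. Combining, $\Var{\mu(X_{t+1}) \mid X_t = S} \ge (\nu_t \ell_t)^2 \ge \pmin^2/N^4$, completing the argument.

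The main technical point is the additivity claim, which is what makes the ``minimum jump size'' $\ge \pmin$ rather than some much smaller quantity; once additivity is in hand, the rest of the proof is a mechanical application of the drift machinery developed in \cref{sec:theory}.
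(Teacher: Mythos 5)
Your proposal is correct and follows essentially the same route as the paper: take $\mu=\fp^{r=1}$ as a zero-drift potential, lower-bound the one-step variance by $(\nu_t\ell_t)^2\ge(\pmin/N^2)^2$ using a cross-edge for $\nu_t\ge 1/N^2$ and additivity of neutral fixation probabilities for $\ell_t\ge\pmin$, apply Lemmas~\ref{lemma:drift-second-moment-exactly-zero} and~\ref{lemma:variance-lower-bound-zero-drift}, and then convert to a fixation-time bound via $\AT\ge\fp\cdot\T\ge\pmin\cdot\T$. The only cosmetic difference is that you justify additivity by a coalescent-duality argument where the paper cites the literature, and you make explicit the jump-size bound $\ell_t\ge\pmin$ that the paper leaves implicit.
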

\begin{proof}
    Note that for any nonempty set $S\subsetneq \Vertices$ of mutants, the probability of making an active step is at least $1/N^2$. Indeed, since $\Graph$ is strongly connected, there is a mutant node $u$ with a resident out-neighbor $v$. Node $u$ is selected for reproduction with probability $r/(r\cdot |S|+ (N-|S|))\ge 1/N$, and the offspring replaces $v$ with probability $1/\deg^+(u)>1/N$. Moreover, since in the neutral regime ($r=1$) the fixation probability is additive over the set of nodes occupied by mutants, if such an active step happens and node $v$ becomes a mutant, then the fixation probability increases by $\fp_{r=1}(\Graph,\{v\})\ge \fpmin$.

    Let $(X_t)_{t\ge 0}$ be the mutant configuration after $t$ steps of the Moran process. 
    We aim to apply~\cref{lemma:first-hitting-times}, where $Z_t=\fp(X_t)$, $\alpha=0$, and $\beta=1$.
    To do that, we bound the conditional variance $\Var{Z_{t+1}\mid Z_0,\dots,Z_t}$ from below as follows:
     \begin{align}
         \Var{Z_{t+1}\mid Z_0,\ldots,Z_t}
         &= \E{\left(Z_{t+1}-\E{Z_{t+1}\mid Z_0,\ldots,Z_t}\right)^2\mid Z_0,\ldots,Z_t} \\
         &= \E{\left(Z_{t+1}-Z_{t}\right)^2\mid Z_0,\ldots,Z_t} \\
         &\geq \left(\E{\big|Z_{t+1}-Z_{t}\big|\mid Z_0,\ldots,Z_t}\right)^2 \\
         &\geq \left(\frac1{N^2}\cdot \fpmin\right)^2,
    \end{align}
    where in the respective steps we used the definition of conditional variance, the fact that fixation probability does not change in expectation in one step of the process, Jensen's inequality for a convex function $f(x)=x^2$, and the observation that with probability at least $1/N^2$ the fixation probability changes by at least $\fpmin$.

    Applying~\cref{lemma:first-hitting-times}, we find that 
    \begin{equation}\label{eq:absorption-time-upper-bound}
        \AbsorptionTime_{r=1}(\Graph,S) \leq \left(\frac{N^2}{\pmin}\right)^2\cdot\FixationProbability(S)(1-\FixationProbability(S)) \leq \frac{N^4}{4\pmin^2},
    \end{equation}
    for any $S\subseteq \Vertices$ where in the last step we used an inequality $x(1-x)\le 1/4$ that holds for any $x\in(0,1)$.

    
    

    Finally, we turn the obtained bound on absorption time into a bound on fixation time. Let $S\subseteq \Vertices$ and denote by $\ET_{r=1}(G,S)$ the extinction time starting from $S$. Then
    \[
    \AT_{r=1}(\Graph,S) = 
    \FixationProbability_{r=1}(\Graph, S)\cdot \T_{r=1}(\Graph,S)
    +\left(1-\FixationProbability_{r=1}(\Graph, S)\right)\cdot \ET_{r=1}(\Graph,S) \geq \pmin\cdot \T_{r=1}(\Graph,S),
    \]
    thus taking the maximum over $S\subseteq\Vertices$ we obtain $\T_{r=1}(\Graph)\leq \AT_{r=1}(\Graph)/\pmin$.
\end{proof}

As our second ingredient, we relate the fixation time in the neutral regime ($r=1$) and the fixation time when $r>1$.

\begin{lemma}\label{corollary:fixation-time-for-r-greater-than-1}
    Let $\Graph=(\Vertices,\Edges)$, $r\ge 1$, and let $\pmin := \min_{S\subseteq\Vertices, S\neq\emptyset}\fp_{r=1}(\Graph,S)$.
    Then $\FTr(\Graph) \leq \frac{ 4r}{\pmin}\cdot \FT_{r=1}(\Graph)$.
\end{lemma}
\begin{proof}
    Consider the process $M_1$ with $r=1$ and the process $M_r$ with $r\ge 1$.
    In $M_1$, there exist fixating trajectories with probability mass at least $\pmin$.
    They fixate on average in at most $T_1=\FT_{r=1}(\Graph)$ steps.
    Let $A_1$ be a random variable for the
    absorption time of a trajectory drawn from $M_1$ according
    to the birth-death updating rule.
    Let $\mathcal F$ be the event that a trajectory drawn from $M_1$ according
    to the birth-death updating rule fixates.
    Thus $T_1 = \E{A_1\mid \mathcal F}$.
    By applying \cref{lem:markov} with $X=A_1$, $a=2T_1$, and $\mathcal E=\mathcal F$,
    \begin{equation}
        \Pr[A_1> 2T_1\mid \mathcal F] \leq \E{A_1\mid \mathcal F}/(2T_1)
        = 1/2.
    \end{equation}
    So for the complementary event $A_1\le 2T_1$, we have
    \begin{align}
        \Pr[A_1 \leq 2T_1 \text{ and } \mathcal F]
        &= \Pr[A_1 \leq 2T_1 \mid \mathcal F] \cdot \Pr[\mathcal F] \\
        &\geq (1/2) \cdot \pmin.
    \end{align}
    Thus in $M_1$ there exist fixating trajectories with total probability mass at least $\pmin/2$ that fixate in at most $2 T_1$ steps each.
    
    Let $\tilde{M}_1$ and $\tilde{M}_r$ denote the continuous-time versions of the processes
    as described in \cite{diaz2016absorption}.
    Then by Lemma 5 of \cite{diaz2016absorption}, there is a coupling 
    between the continuous-time versions of the two processes such that
    if the processes start with the same set of mutants (that is, $\tilde{M}_1[0] = \tilde{M}_r[0]$),
    then $\tilde{M}_1[t] \subseteq \tilde{M}_r[t]$ for all $t \geq 0$.
    Let $\tilde{\tau}_1$ be one possible mutant set trajectory that fixates for $\tilde{M}_1$.
    Then the coupling implies that the corresponding trajectory $\tilde{\tau}_r$ for $\tilde{M}_r$
    fixates even earlier, when measured in the continuous time.
    
    Next, we transfer this relationship back into the world of discrete-time processes $M_1$ and $M_r$.
    Note that at each moment in time, the total fitness of the population in $\tilde{M}_r$ is at most $rN$, that is, it is at most $r$ times as large as the total fitness of the population in $\tilde{M}_1$.
    Thus, reproduction events in $\tilde{M}_r$ occur at a rate that is at most $r$ times larger than the rate at which reproduction events occur in $\tilde{M}_1$.
    When we move from continuous time to discrete time, we count each reproduction event as lasting 1 unit of time.
    Thus, any time a trajectory $\tilde{\tau}_1\in \tilde{M}_1$ gives rise to a trajectory $\tau_1\in M_1$ with length $\ell$, the coupled trajectory $\tilde{\tau}_r \in \tilde{M}_r$ gives rise to a trajectory $\tau_r\in M_r$ with length at most $r\ell$.
    
    Because in $M_1$ there exist fixating trajectories with total probability mass at least $\fpmin/2$ that fixate in at most $2T_1$ steps each,
    in $M_r$ there exist fixating trajectories with total probability mass $\pmin/2$ that fixate in at most $2rT_1$ steps each.
    Now imagine we run $M_r$ for stages of $2rT_1$ steps each.
    Within each stage, the process fixates with probability at least $\pmin/2$ so in expectation we observe fixation after at most $2/\pmin$ stages. 
    In total, this gives the desired
    \[
    \FTr(\Graph) \le \frac2{\pmin}\cdot 2r\cdot T_1 = \frac{4r}{\pmin}\cdot T_1(\Graph).\qedhere
    \]
\end{proof}

It remains to combine the two ingredients.

\begin{theorem}\label{thm:fixation-time-bound-using-only-rho-min}
Fix a graph $\Graph_N$ on $N$ vertices and $r\ge 1$.
Then $\FTr(\Graph_N)\le
\frac{N^6}{\pmin^4}.$
\end{theorem}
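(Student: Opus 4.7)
The plan is to chain together the two preceding lemmas and observe that their bounds compose to give exactly the desired estimate. First I would apply \cref{corollary:fixation-time-for-r-greater-than-1}, which reduces a bound on the conditional fixation time for an advantageous mutant ($r \ge 1$) to a bound on the conditional fixation time for the neutral process ($r = 1$), at the cost of a factor $4/\pmin$. Specifically, it gives
\[
\T^r(G_N) \le \frac{4\,\T^{r=1}(G_N)}{\pmin(G_N)}.
\]

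Next I would invoke \cref{lemma:fixation-time-through-absorption-time}, whose second bound states
\[
\T^{r=1}(G_N) \le \frac{N^4}{4\,\pmin(G_N)^3}.
\]
Substituting this into the previous inequality yields
\[
\T^r(G_N) \le \frac{4}{\pmin(G_N)} \cdot \frac{N^4}{4\,\pmin(G_N)^3} = \frac{N^4}{\pmin(G_N)^4} = \left(\frac{N}{\pmin(G_N)}\right)^4,
\]
which is precisely the claim.

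Because the two input lemmas have been established already, there is essentially no obstacle here; the theorem is the composition of the previous two results, and the factors $4$ and $1/4$ were evidently chosen so that they cancel to give the clean bound $(N/\pmin)^4$. The only small thing worth verifying is that $\pmin(G_N)$ refers to the same quantity (the minimum neutral fixation probability) in both lemmas, which it does by definition in \cref{corollary:fixation-time-for-r-greater-than-1}. So the entire proof is a two-line calculation, and I expect no technical difficulties.
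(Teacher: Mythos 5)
Your proposal is correct and matches the paper's proof exactly: the paper also obtains the theorem by composing \cref{corollary:fixation-time-for-r-greater-than-1} with the second bound of \cref{lemma:fixation-time-through-absorption-time}, and the factors of $4$ cancel just as you compute. Nothing further is needed.
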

\begin{proof}
We distinguish two cases. If $r\ge N^2$ then \cref{thm:r-infty} implies that $\FTr(\Graph_N)\le 3N^3$ which is stronger than the claimed bound. So suppose $r< N^2$. Then \cref{corollary:fixation-time-for-r-greater-than-1,lemma:fixation-time-through-absorption-time} yield
\[\FTr(\Graph_N)
\le \frac{4r}{\fpmin}\cdot \FT_{r=1}(\Graph_N) \le \frac{4r}{\fpmin}\cdot \frac{N^4}{4\fpmin^3}
\le \frac{N^6}{\fpmin^4}. \qedhere
\]
\end{proof}


\section{Balanced graphs (and others) fixate quickly}\label{sec:good}
\begin{lemma}\label{lemma:fp-linear-system}
    Let $\Graph=(\Vertices,\Edges)$.
    Suppose there exist $|\Vertices|$ numbers $\{x_v\mid v\in \Vertices\}$ that satisfy
    \begin{equation}\label{eq:lin-system}
       x_v\cdot \sum_{u: u\to v\in \Edges} \frac1{\OutDegree{u}}
       =
       \frac1{\OutDegree{v}} \cdot \sum_{w: v\to w \in \Edges} x_w
    \end{equation}
    for each $v\in\Vertices$.
    In addition, suppose $\sum_{v\in\Vertices} x_v=1$.
    Then $x_v=\FixationProbability(\{v\})$.
    
\end{lemma}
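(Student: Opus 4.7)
The plan is to exhibit $\mu(S):=\sum_{v\in S} x_v$ as a bounded martingale for the neutral ($r=1$) Moran Birth--death process on $\Graph$ and then apply optional stopping; specializing $X_0=\{v\}$ will yield the claim. Since no fitness parameter appears in \eqref{eq:lin-system}, the natural reading is that the statement concerns the neutral fixation probability, and indeed \eqref{eq:lin-system} is precisely the per-vertex balance condition that makes $\mu$ a martingale under neutral drift.

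First, I would compute the one-step drift of $\mu$ directly. With $r=1$, at each step one picks $u\in\Vertices$ uniformly and then an outgoing edge $u\to v$ with probability $1/\outdeg(u)$; the value of $\mu$ changes only when the chosen edge crosses the cut $(S,\Vertices\setminus S)$, in which case it changes by $+x_v$ (if $u\in S,\ v\notin S$) or by $-x_v$ (if $u\notin S,\ v\in S$). Collecting these contributions,
\[
\E{\mu(X_{t+1})-\mu(X_t)\mid X_t=S}
=\frac{1}{N}\left(\sum_{\substack{u\in S,\ v\notin S\\ u\to v\in\Edges}}\frac{x_v}{\outdeg(u)}
-\sum_{\substack{u\notin S,\ v\in S\\ u\to v\in\Edges}}\frac{x_v}{\outdeg(u)}\right).
\]

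Next, I would sum the hypothesis \eqref{eq:lin-system} over $v\in S$ and split each inner sum according to whether the other endpoint of the edge lies in $S$ or in $\Vertices\setminus S$. The "$S$-interior" pieces on the two sides both equal $\sum_{u,v\in S,\ u\to v\in\Edges} x_v/\outdeg(u)$ (after the relabeling $u\leftrightarrow v,\ v\leftrightarrow w$ on the right) and so cancel. What remains matches the difference of cross-cut sums displayed above (with opposite sign), so the drift vanishes and $\mu(X_t)$ is a martingale.

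Finally, I would invoke optional stopping. Since $\Graph$ is strongly connected, the absorption time $\tau$ is almost surely finite, and $\mu$ takes only finitely many values so it is bounded. Hence $\mu(X_0)=\E{\mu(X_\tau)\mid X_0}$, and using $\mu(\emptyset)=0$ together with $\mu(\Vertices)=\sum_v x_v=1$, the right side equals $\FixationProbability(\Graph,X_0)$. Setting $X_0=\{v\}$ gives $x_v=\FixationProbability(\{v\})$. The only delicate step is the bookkeeping in the cancellation, which requires organizing four edge categories (both endpoints in $S$, both in $\Vertices\setminus S$, and the two crossing directions); this is routine once set up and I do not anticipate any substantive obstacle.
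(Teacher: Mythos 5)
Your proof is correct, but it takes a genuinely different route from the paper. The paper works backwards from the fixation probabilities: it invokes the additivity $\fp(S)=\sum_{v\in S}\fp(\{v\})$ of neutral fixation probabilities (citing an external result), writes the one-step recurrence for $\fp(\{v\})$ from a singleton state, reduces it via additivity to exactly the system \eqref{eq:lin-system}, and then concludes by identifying the $x_v$ with the $\fp(\{v\})$ as the (normalized) solution of that system. You instead work forwards from the $x_v$: you verify that $\mu(S)=\sum_{v\in S}x_v$ has zero drift on \emph{every} state $S$ (the cut-edge cancellation you describe is exactly right, and the drift formula for the $r=1$ dynamics is correct), and then apply optional stopping to a bounded martingale with an almost surely finite absorption time. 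Your argument buys two things: it is self-contained (additivity of $\fp$ falls out as a corollary, $\fp(S)=\mu(S)$, rather than being an input), and it sidesteps the uniqueness question for the linear system, which the paper asserts but does not argue --- in your version, \emph{any} normalized solution of \eqref{eq:lin-system} is forced to equal the fixation probabilities by optional stopping, so uniqueness is a conclusion rather than a hypothesis. The one implicit assumption you share with the paper is that the lemma concerns the neutral process $r=1$; you flag this explicitly, and it matches how the lemma is used later (in the proof of the balanced-graph theorem). No gaps.
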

\begin{proof}
Suppose $r=1$ and denote $N=|\Vertices|$. Then the fixation probability is additive, that is, $\fp(S) = \sum_{v\in S} \fp(\{v\})$ for every $S\subseteq \Vertices$ \cite{broom2010two}.
Thus, the list of $2^N$ fixation probabilities $\FixationProbability(S)$ for $S\subseteq \Vertices$ is determined by the list of $N$ fixation probabilities $\fp(\{v\})$ for $v\in\Vertices$.
The fixation probabilities $\fp(\{v\})$ are the unique solutions to the linear system
\begin{equation}
    \FixationProbability(\{v\}) = \frac1N\sum_{v\to w\in\Edges} \frac1{\OutDegree{v}}\cdot {\FixationProbability(\{v,w\})}
    +\frac1N\sum_{u\to v\in\Edges}\frac1{\OutDegree{u}}\cdot \FixationProbability(\emptyset)
    +\left(1-\frac1N - \frac1N\sum_{u\to v\in\Edges}\frac1{\OutDegree{u}}\right)\cdot \FixationProbability(\{v\}),
\end{equation}
where the first term on the right-hand side corresponds to the mutant reproducing, the second term corresponds to the mutant being replaced, and the third term corresponds to a resident replacing a resident.
Using $\fp(\emptyset)=0$ and $\fp(\{v,w\})=\fp(\{v\})+\fp(\{w\})$ we obtain 
    \begin{equation}\label{eq:lin-system}
       \FixationProbability(\Set{v})\cdot \sum_{u: u\to v\in \Edges} \frac1{\OutDegree{u}}
       =
       \frac1{\OutDegree{v}} \cdot \sum_{w: v\to w \in \Edges} \FixationProbability(\Set{w})
    \end{equation}
which is precisely the system satisfied by $\{x_v\mid v\in\Vertices\}$.

\end{proof}

\begin{definition}
   A graph $\Graph=(\Vertices,\Edges)$ is \emph{balanced} if and only if
   \begin{equation}
       \frac1{\InDegree{v}}\cdot \sum_{u: u\to v\in \Edges} \frac1{\OutDegree{u}}
       =
       \frac1{\OutDegree{v}} \cdot \sum_{w: v\to w \in \Edges} \frac1{\InDegree{w}}.
    \end{equation}
\end{definition}


\begin{theorem}\label{thm:balanced-graph-properties} 
    Let $\Graph_N$ be a \good{} graph. Then:
    \begin{enumerate}
    \item\label{item:balanced-graph-properties-1} $\fp_{r=1}(\Graph_N,u)=\frac{1/\indeg(u)}{\sum_{v\in \Vertices} 1/\indeg(v)}\ge 1/N^2$ for any node $u$.
    \item\label{item:balanced-graph-properties-2} $\FTr(\Graph_N)\le N^{14}$ for any $r\ge 1$.
    \end{enumerate}
\end{theorem}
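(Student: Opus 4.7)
The plan is to establish part~(\ref{item:balanced-graph-properties-1}) by exhibiting an explicit solution to the linear system in~\cref{lemma:fp-linear-system}, and then to derive part~(\ref{item:balanced-graph-properties-2}) by plugging the resulting lower bound on $\pmin$ into \cref{thm:fixation-time-bound-using-only-rho-min}.

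For~(\ref{item:balanced-graph-properties-1}), the natural ansatz is $x_v = c/\InDegree{v}$ with $c = \left(\sum_{u\in\Vertices} 1/\InDegree{u}\right)^{-1}$ chosen so that $\sum_v x_v = 1$. Substituting this into \eqref{eq:lin-system} and cancelling the common factor $c$ from both sides, what remains is
\[
\frac{1}{\InDegree{v}}\sum_{u: u\to v\in\Edges}\frac{1}{\OutDegree{u}} = \frac{1}{\OutDegree{v}}\sum_{w: v\to w\in\Edges}\frac{1}{\InDegree{w}},
\]
which is exactly the defining balance condition, so the ansatz solves the system. By \cref{lemma:fp-linear-system} we conclude $\fp^{r=1}(G_N,\{u\}) = x_u$, matching the displayed formula. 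For the lower bound, $\InDegree{u}\le N-1$ gives $1/\InDegree{u}\ge 1/(N-1)$, while $\InDegree{v}\ge 1$ for every $v$ (by strong connectivity) gives $\sum_v 1/\InDegree{v}\le N$, so $x_u \ge 1/(N(N-1)) \ge 1/N^2$.

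For~(\ref{item:balanced-graph-properties-2}), I would combine additivity of the neutral fixation probability (the property used in the proof of \cref{lemma:fp-linear-system}, via~\cite{broom2010two}) with part~(\ref{item:balanced-graph-properties-1}): for any nonempty $S\subseteq\Vertices$,
\[
\fp^{r=1}(G_N,S) = \sum_{v\in S}\fp^{r=1}(G_N,\{v\}) \ge \frac{1}{N^2},
\]
hence $\pmin(G_N)\ge 1/N^2$. Feeding this into \cref{thm:fixation-time-bound-using-only-rho-min} gives $\T^r(G_N)\le \left(N/\pmin(G_N)\right)^4 \le (N\cdot N^2)^4 = N^{12}$ for every $r\ge 1$.

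The only genuine moment of insight here is recognizing the ansatz $x_v \propto 1/\InDegree{v}$; once that is in hand, verification is essentially a one-line cancellation, and everything else is bookkeeping on top of the drift-theoretic machinery already assembled. Consequently I expect no serious obstacle, only the need to cleanly sequence these invocations.
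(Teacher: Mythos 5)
Your proposal is correct and follows essentially the same route as the paper: part (1) is obtained by verifying the ansatz $x_v \propto 1/\InDegree{v}$ against the linear system of \cref{lemma:fp-linear-system} (where cancelling the normalization constant leaves exactly the balance condition), and part (2) by feeding the resulting bound $\pmin \ge 1/N^2$ into \cref{thm:fixation-time-bound-using-only-rho-min}. The paper's own proof is terser, but your added details (the explicit cancellation and the use of additivity to pass from single-vertex fixation probabilities to $\pmin$ over all nonempty sets) are exactly the steps it leaves implicit.
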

\begin{proof}
    The equality in Item \ref{item:balanced-graph-properties-1} follows from Lemma \ref{lemma:fp-linear-system} and the bound follows from the fact that the numerator is at least $1/N$ and the denominator is at most $N$.
    Item \ref{item:balanced-graph-properties-2} follows immediately from Item \ref{item:balanced-graph-properties-1}
    and from \cref{thm:fixation-time-bound-using-only-rho-min}.
\end{proof}

\noindent In the rest of this section we verify that the undirected graphs, carousels, books, metafunnels, and superstars are all balanced. Thus, the fixation probability under neutral drift ($r=1$) starting from node $v$ is inversely proportional to $\indeg(v)$.
Moreover, we provide an explicit formula for fixation probability on Megastars under neutral drift.
This gives an upper bound on the fixation time for any $r\ge 1$ by \cref{thm:fixation-time-bound-using-only-rho-min}.
We use the notation $\FixationProbabilityUnderNeutralEvolution{\Set{v}}\propto p_v$ to mean that $\FixationProbabilityUnderNeutralEvolution{\Set{v}} = p_v/\sum_{w\in\Vertices} p_w$.;

Recall that a graph is \textit{undirected} if all edges are two-way, that is, if $u\to v\in\Edges$ then $v\to u\in\Edges$ as well.

\begin{claim}\label{thm:bi-directional-fps}
    Suppose $\Graph=(\Vertices, \Edges)$ is undirected.
    Then $\FixationProbabilityUnderNeutralEvolution{\Set{v}} \propto 1/\Degree(v)$.
\end{claim}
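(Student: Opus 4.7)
The plan is to show that every undirected graph is balanced and then invoke Item \ref{item:balanced-graph-properties-1} of Theorem \ref{thm:balanced-graph-properties}. This reduces the claim to a short verification of the defining identity in the undirected setting.

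First I would record two structural consequences of $\Graph$ being undirected. The edge relation is symmetric, so for each $v \in \Vertices$ the in-neighborhood $\{u : u \to v \in \Edges\}$ and the out-neighborhood $\{w : v \to w \in \Edges\}$ both coincide with the usual neighborhood $N(v)$; moreover, $\indeg(w) = \outdeg(w) = \deg(w)$ for every vertex $w$. Substituting these equalities into the defining identity of a balanced graph, the left-hand side $\tfrac{1}{\indeg(v)} \sum_{u : u \to v \in E} \tfrac{1}{\outdeg(u)}$ becomes $\tfrac{1}{\deg(v)} \sum_{u \in N(v)} \tfrac{1}{\deg(u)}$, and the right-hand side $\tfrac{1}{\outdeg(v)} \sum_{w : v \to w \in E} \tfrac{1}{\indeg(w)}$ likewise becomes $\tfrac{1}{\deg(v)} \sum_{w \in N(v)} \tfrac{1}{\deg(w)}$. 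These two expressions are literally identical, so $\Graph$ is balanced.

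With balancedness established, Item \ref{item:balanced-graph-properties-1} of Theorem \ref{thm:balanced-graph-properties} immediately yields $\fp^{r=1}(\Graph,\Set{v}) \propto 1/\indeg(v) = 1/\deg(v)$, which is the stated claim. There is no real obstacle here: the balancedness identity is manifestly symmetric between its two sides once the edge relation is symmetric, so the entire argument amounts to unwinding definitions and quoting the previous theorem. (Equivalently, one could plug $x_v = (1/\deg(v))/\sum_w 1/\deg(w)$ directly into the linear system of Lemma \ref{lemma:fp-linear-system}, with the same one-line verification.)
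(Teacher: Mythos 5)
Your proposal is correct and is essentially the paper's argument: the paper verifies the linear system of Lemma \ref{lemma:fp-linear-system} directly with $x_v \propto 1/\Degree(v)$, which for an undirected graph is literally the same one-line symmetry check you perform to establish balancedness (and which you also note as the equivalent direct route). Routing it through Theorem \ref{thm:balanced-graph-properties} is a harmless repackaging, since that theorem's Item \ref{item:balanced-graph-properties-1} is itself proved by the same substitution.
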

\begin{proof}
    Checking by substituting,
    \begin{align}
        &\frac{1}{\Degree(v)}\sum_{v\to w\in\Edges}\frac{1}{\Degree(w)}
        = \frac{1}{\Degree(v)} \sum_{u\to v\in\Edges}\frac{1}{\Degree(u)} \\
        \iff 
        &\frac{1}{\Degree(v)}\sum_{\substack{v\to w\in\Edges \\ w\to v\in\Edges}}\frac{1}{\Degree(w)}
        = \frac{1}{\Degree(v)} \sum_{\substack{u\to v\in\Edges \\ v\to u\in\Edges}}\frac{1}{\Degree(u)} \\
        \iff 
        &\frac{1}{\Degree(v)}
        = \frac{1}{\Degree(v)}
    \end{align}
    which is always true.
\end{proof}

\noindent For the following claims, we omit the proofs since the technique is
similar of that of the proof of \cref{thm:bi-directional-fps}.

\begin{definition}
    A \emph{carousel} is multipartite graph consisting of a partition of $\Vertices$
    into sets $S_1,\ldots,S_\ell$ such that $u\to v\in\Edges$ if and only if
    there exists an $i$ such that $u\in S_{i}$ and $v\in S_{i+1}$.
    We say that $S_{\ell+1}:= S_1$, $S_{0}:=S_\ell$, $S_{-1} := S_{\ell-1}$, etc.
\end{definition}

\begin{claim}\label{thm:carousel-fps}
    Suppose $\Graph=(\Vertices, \Edges)$ is a carousel
    with sets $S_1,\ldots,S_\ell$.
    Then for $v\in S_i$ we have that $\FixationProbabilityUnderNeutralEvolution{\Set{v}}\propto |S_{i-1}|^{-1}$.
\end{claim}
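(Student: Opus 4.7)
The plan is to reduce to the general balanced-graph framework already developed. Specifically, by \cref{thm:balanced-graph-properties}, Item \ref{item:balanced-graph-properties-1}, it suffices to show that any carousel is balanced, since that immediately yields $\FixationProbabilityUnderNeutralEvolution{\Set v}\propto 1/\indeg(v)$, and for a carousel we have $\indeg(v)=|S_{i-1}|$ whenever $v\in S_i$, giving the claim.

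First I would collect the structural facts about the carousel. For $v\in S_i$: the in-neighbors of $v$ are precisely the vertices of $S_{i-1}$, and the out-neighbors of $v$ are precisely the vertices of $S_{i+1}$. Consequently $\indeg(v)=|S_{i-1}|$ and $\outdeg(v)=|S_{i+1}|$. Moreover, for every $u\in S_{i-1}$ we have $\outdeg(u)=|S_i|$, and for every $w\in S_{i+1}$ we have $\indeg(w)=|S_i|$.

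Next I would verify the balance identity from \cref{def:} directly (since the definition environment above is unnumbered in my head, I refer to it just as the defining condition). Plugging in the quantities above,
\begin{equation}
    \frac{1}{\indeg(v)}\sum_{u: u\to v\in E}\frac{1}{\outdeg(u)} = \frac{1}{|S_{i-1}|}\cdot |S_{i-1}|\cdot \frac{1}{|S_i|}=\frac{1}{|S_i|},
\end{equation}
and symmetrically
\begin{equation}
    \frac{1}{\outdeg(v)}\sum_{w: v\to w\in E}\frac{1}{\indeg(w)} = \frac{1}{|S_{i+1}|}\cdot |S_{i+1}|\cdot \frac{1}{|S_i|}=\frac{1}{|S_i|}.
\end{equation}
Both sides agree, so every carousel is balanced.

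Finally, invoking Item \ref{item:balanced-graph-properties-1} of \cref{thm:balanced-graph-properties}, we conclude $\FixationProbabilityUnderNeutralEvolution{\Set v}\propto 1/\indeg(v)=1/|S_{i-1}|$, which is exactly the claim. There is no real obstacle here: the only care needed is with the cyclic indexing $S_0=S_\ell$, $S_{\ell+1}=S_1$, which is already fixed by the definition, so the formulas above hold for all $i\in\{1,\dots,\ell\}$ without modification. Alternatively, one could bypass \cref{thm:balanced-graph-properties} and apply \cref{lemma:fp-linear-system} directly with the candidate solution $x_v=(1/|S_{i-1}|)/\bigl(\sum_{j} |S_j|/|S_{j-1}|\bigr)$ for $v\in S_i$; the same two-line computation verifies \eqref{eq:lin-system}.
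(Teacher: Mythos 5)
Your proof is correct and is essentially the paper's own argument: the balance identity you verify is exactly the linear system of \cref{lemma:fp-linear-system} with $x_v=1/\indeg(v)=1/|S_{i-1}|$ substituted, which is the same two-line computation the paper performs directly. Routing it through \cref{thm:balanced-graph-properties} rather than citing \cref{lemma:fp-linear-system} immediately is only a repackaging, and your closing remark already notes the direct route.
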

\begin{proof}
    We have
    \begin{align}
        |S_{i-1}|^{-1}\cdot\sum_{u\to v \in \Edges} \frac{1}{\OutDegree{u}}
        = |S_{i-1}|\cdot {|S_i|}^{-1} |S_{i-1}|^{-1}
        = {|S_i|}^{-1}
    \end{align}
    and
    \begin{align}
        \frac{1}{\OutDegree{v}}\sum_{v\to w\in\Edges} |S_{i}|^{-1}
        = |S_{i}|^{-1}.
    \end{align}
\end{proof}

\begin{definition}
    An \emph{$(s_1,\ldots,s_\ell)$-book} is a graph with vertices $b$ (beginning), $e$ (end),
    and $p^{i}_{j_i}$ for $i=1,\ldots, \ell$ and $j_i = 2,\ldots,s_{i-1}$.
    We say that $p^i_1 = p^i_{s_i+1} = b$ and $p^i_{s_i}=e$ for each $i$.
    We have $p^{i}_{j_i}\to p^{i}_{j_i+1}\in \Edges$ for all $j_i=1,\ldots,s_i$
\end{definition}
\begin{claim}\label{thm:book-fps}
    Suppose $\Graph=(\Vertices, \Edges)$ is a $(s_1,\ldots,s_\ell)$-book.
    Then
    \begin{enumerate}
        \item $\FixationProbabilityUnderNeutralEvolution{\Set{b}}\propto \ell$
        \item $\FixationProbabilityUnderNeutralEvolution{\Set{e}}\propto 1$
        \item $\FixationProbabilityUnderNeutralEvolution{\Set{p^i_2}}\propto \ell$
        for each $i$
        \item $\FixationProbabilityUnderNeutralEvolution{\Set{p^i_j}}\propto 1$
        for each $i$ and each $j > 1$.
    \end{enumerate}
\end{claim}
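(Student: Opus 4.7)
The plan is to invoke \cref{lemma:fp-linear-system}: it suffices to exhibit (up to a single global scaling) values $\{x_v\}_{v\in\Vertices}$ that satisfy the local balance equation \eqref{eq:lin-system} at every vertex and to verify that they match the claimed proportionalities; normalizing $\sum_v x_v = 1$ then yields the fixation probabilities.

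First I would catalog the relevant degrees in an $(s_1,\ldots,s_\ell)$-book. The hub $b$ has $\outdeg(b) = \ell$ (one outgoing edge to each page) and $\indeg(b) = 1$ (only the back-edge $e \to b$); symmetrically, $\indeg(e)=\ell$ and $\outdeg(e) = 1$. Every interior page vertex $p^i_j$ with $3 \le j \le s_i-1$ has both $\indeg$ and $\outdeg$ equal to $1$, so \eqref{eq:lin-system} at such a vertex collapses to $x_{p^i_j} = x_{p^i_{j+1}}$. The vertex $p^i_2$ has $\outdeg(p^i_2) = 1$ but its sole predecessor is $b$ whose outdegree is $\ell$; this asymmetry is where the factor of $\ell$ will enter.

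Next I would set $x_e := 1$ and propagate back along each page. The chain of interior equalities gives $x_{p^i_{s_i-1}} = \cdots = x_{p^i_3} = x_e = 1$, matching the fourth bullet. The balance at $e$ reads $x_e \cdot \ell = x_b \cdot 1$, forcing $x_b = \ell$; the balance at $p^i_2$ reads $x_{p^i_2}\cdot\tfrac{1}{\ell} = x_{p^i_3} = 1$, forcing $x_{p^i_2} = \ell$; and the equation at $b$ serves as a consistency check, with both sides equal to $\ell$:
\[
x_b \cdot 1 \;=\; \frac{1}{\ell} \sum_{i=1}^\ell x_{p^i_2} \;=\; \frac{1}{\ell}\cdot \ell \cdot \ell \;=\; \ell.
\]
Dividing by $\sum_v x_v$ then produces the advertised proportionalities, exactly as in the proofs of \cref{thm:bi-directional-fps,thm:carousel-fps}.

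The main obstacle, as with the carousel computation, is bookkeeping around the paper's relabeling convention $p^i_1 = b$, $p^i_{s_i} = e$, and $p^i_{s_i+1} = b$, especially when reading off balance equations at the boundary vertices $b$ and $e$ where the cyclic identifications collapse several formal sums into a single term. A minor edge case is $s_i = 2$, where $p^i_2$ coincides with $e$; the third bullet must then be understood as restricted to those $i$ with $s_i \ge 3$, and the fourth bullet analogously as $j \ge 3$. Once these conventions are pinned down, the verification reduces to the four routine substitutions above.
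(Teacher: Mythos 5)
Your proposal is correct and follows exactly the route the paper intends: the paper's proof of this claim is just the one-line remark that the values can be checked by substitution into \eqref{eq:lin-system}, and your write-up supplies precisely that verification (with the right degree bookkeeping at $b$, $e$, and $p^i_2$, and the appropriate reading of the boundary/edge cases). No further changes needed.
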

\begin{proof}
    This can be easily checked by plugging into the equations of \eqref{eq:lin-system}.
\end{proof}

\begin{definition}
    See \S 1.1.1 of \cite{galanis2017amplifiers} for the definition of a \emph{$(k, \ell, m)$-metafunnel}.
\end{definition}
\begin{claim}\label{thm:metafunnel-fps}
    Suppose $\Graph=(\Vertices, \Edges)$ is a $(k, \ell, m)$-metafunnel.
    Then
    \begin{enumerate}
        \item $\FixationProbabilityUnderNeutralEvolution{\Set{v}}\propto m^{1-i}$
        for each $v\in V_i$ such that $i\neq 1$
        \item $\FixationProbabilityUnderNeutralEvolution{\Set{v}}\propto \ell$
        for each $v\in V_1$.
    \end{enumerate}
\end{claim}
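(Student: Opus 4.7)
The plan is to apply \cref{lemma:fp-linear-system}: it suffices to exhibit weights $\{x_v\}_{v\in\Vertices}$ that satisfy the linear system \eqref{eq:lin-system}, since after normalizing by $\sum_w x_w$ we obtain the actual fixation probabilities, and the proportionality statements in the claim are exactly what is preserved under this normalization. Since the system \eqref{eq:lin-system} is homogeneous in $\{x_v\}$, we need not worry about the normalizing constant during the verification.

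I would then proceed layer by layer on the $(k,\ell,m)$-metafunnel. For a vertex $v\in V_i$ with $i\neq 1$, assign $x_v = m^{1-i}$; for $v\in V_1$, assign $x_v = \ell$. For each layer, I would identify the in-neighbors and out-neighbors of a generic vertex and their locations (which $V_j$ they sit in), tabulate the corresponding out-degrees, and then compute both sides of \eqref{eq:lin-system}:
\begin{equation*}
x_v\cdot \sum_{u:\, u\to v\in E} \frac{1}{\OutDegree{u}}
\quad \text{versus} \quad
\frac{1}{\OutDegree{v}}\cdot \sum_{w:\, v\to w\in E} x_w.
\end{equation*}
By design, stepping from layer $V_{i-1}$ to $V_i$ multiplies the weight by $1/m$ while the in-degree/out-degree structure of the metafunnel scales by the reciprocal factor, so each layer equation should collapse to an identity. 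The special value $x_v=\ell$ on $V_1$ compensates for the fact that the top layer aggregates contributions from all $\ell$ funnel branches, which is precisely the effect of the loop-back structure in the metafunnel definition.

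The main obstacle is purely bookkeeping, not mathematical: one must unpack the definition of a $(k,\ell,m)$-metafunnel from \cite{galanis2017amplifiers}, correctly count in- and out-degrees at each layer (including the transitional layers between consecutive funnel blocks and the boundary behavior at $V_1$), and verify the identity case by case. Once the incidence structure is pinned down, no inequalities, estimates, or probabilistic arguments are needed beyond what \cref{lemma:fp-linear-system} already supplies; the geometric weights $m^{1-i}$ are the unique choice that makes the telescoping across layers exact, which is why the verification mirrors the routine substitution used in \cref{thm:bi-directional-fps} and \cref{thm:carousel-fps}.
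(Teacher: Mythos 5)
Your approach---assigning the candidate weights $x_v$ and verifying the homogeneous linear system \eqref{eq:lin-system} layer by layer via \cref{lemma:fp-linear-system}, normalizing only at the end---is exactly the technique the paper intends: the paper omits this proof entirely, stating that it follows by the same substitution used for \cref{thm:bi-directional-fps} and \cref{thm:carousel-fps}. Like the paper, you stop short of actually carrying out the in-/out-degree bookkeeping for the metafunnel layers, so your writeup is a sketch at the same level of detail as (in fact slightly more detailed than) the paper's.
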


\begin{definition}
    See \S 1.1.2 of \cite{galanis2017amplifiers} for the definition of a \emph{$(k, \ell, m)$-superstar}.
\end{definition}
\begin{claim}\label{thm:superstar-fps}
    Suppose $\Graph=(\Vertices, \Edges)$ is a $(k, \ell, m)$-superstar.
    Then
    \begin{enumerate}
        \item $\FixationProbabilityUnderNeutralEvolution{\Set{v_{i,j}}}\propto 1$
        for each $i$ and each $j>1$
        \item $\FixationProbabilityUnderNeutralEvolution{\Set{v_{i,1}}}\propto 1/m$
        for each $i$
        \item $\FixationProbabilityUnderNeutralEvolution{\Set{v}}\propto \ell$
        for each $v\in R_i$ for each $i$
        \item $\FixationProbabilityUnderNeutralEvolution{\Set{v^*}}\propto 1$.
    \end{enumerate}
\end{claim}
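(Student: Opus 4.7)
The plan is to apply \cref{lemma:fp-linear-system} directly: it suffices to exhibit numbers $\{x_v\}_{v\in\Vertices}$ that are proportional to the values asserted in the claim and to verify that they satisfy the local balance equation
\[
x_v \cdot \sum_{u:\, u\to v\in\Edges} \frac{1}{\outdeg(u)}
\;=\; \frac{1}{\outdeg(v)} \cdot \sum_{w:\, v\to w\in\Edges} x_w
\]
at every vertex $v$. Normalization is automatic, so the entire argument reduces to a case analysis over the four types of vertices in a $(k,\ell,m)$-superstar. No drift theory is needed here; the claim is purely a neutral-drift statement.

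Before the case analysis I would restate the construction of a $(k,\ell,m)$-superstar from \cite{galanis2017amplifiers}, reading off for each vertex type its in-degree, out-degree, and in- and out-neighborhoods. The hub $v^*$ feeds the $k$ reservoirs $R_1,\dots,R_k$ (each of size $\ell$); each reservoir vertex points into the head $v_{i,1}$ of a directed spoke of length $m$; and the tail $v_{i,m}$ of each spoke points back into $v^*$. Once these local data are written down, each instance of the balance equation becomes a short algebraic check.

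I then proceed type by type. For an interior spoke vertex $v_{i,j}$ with $j>1$, the unique in-neighbor is $v_{i,j-1}$ and the unique out-neighbor is $v_{i,j+1}$ (or $v^*$ when $j=m$); since all these vertices carry equal weight, both sides collapse to the same expression, confirming $x_{v_{i,j}}\propto 1$. For $v_{i,1}$, the in-neighbors are the $\ell$ reservoir vertices of $R_i$ (each of weight $\propto \ell$) and the out-neighbor is $v_{i,2}$ (of weight $\propto 1$); balancing the two sides after accounting for out-degrees forces $x_{v_{i,1}}\propto 1/m$. For a reservoir vertex $v\in R_i$, the in-neighbor is $v^*$ and the out-neighbors are $\{v_{i,1}\}$; plugging in $x_{v^*}\propto 1$ and $x_{v_{i,1}}\propto 1/m$ and balancing yields $x_v\propto \ell$. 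Finally for $v^*$, the in-neighbors are the $k$ tails $v_{i,m}$ (each of weight $1$) and the out-neighbors are the $k\ell$ reservoir vertices (each of weight $\ell$); the already-assigned constants make both sides of the balance equation agree, confirming $x_{v^*}\propto 1$.

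The main obstacle is purely bookkeeping: one must be careful with the precise directions of edges and with the out-degrees of $v^*$ and of the reservoir vertices under the convention of \cite{galanis2017amplifiers}, and one should double-check degenerate cases such as $m=1$ (where $v_{i,1}$ and $v_{i,m}$ coincide) and $\ell=1$. Once the neighborhoods are pinned down, every one of the four identities is a one-line algebraic verification, and \cref{lemma:fp-linear-system} closes the argument.
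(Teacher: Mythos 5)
Your overall strategy is exactly the one the paper intends: the paper omits the proof of this claim entirely, noting only that one plugs the asserted values into the linear system of \cref{lemma:fp-linear-system} and checks the balance equation at each vertex type, which is precisely what you set out to do. The problem is in your reading of the $(k,\ell,m)$-superstar. You take $k$ reservoirs of size $\ell$ and spokes of length $m$, whereas under the convention of \cite{galanis2017amplifiers} (the one against which the constants in the claim are stated) there are $\ell$ branches, each consisting of a reservoir $R_i$ of size $m$ and a path $v_{i,1}\to\cdots\to v_{i,k}$ of length $k$, with $v^*$ pointing to every reservoir vertex and $v_{i,k}$ pointing back to $v^*$. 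With your structure the ``one-line verifications'' you describe do not produce the claimed constants. Concretely, the balance equation at $v_{i,1}$ is
\[
x_{v_{i,1}}\cdot\sum_{u:\,u\to v_{i,1}}\frac{1}{\outdeg(u)}
\;=\;\frac{1}{\outdeg(v_{i,1})}\cdot x_{v_{i,2}},
\]
and since every reservoir vertex has out-degree $1$, the left-hand side equals $x_{v_{i,1}}\cdot|R_i|$; with $x_{v_{i,2}}=1$ this forces $x_{v_{i,1}}=1/|R_i|$, which is $1/m$ only because $|R_i|=m$ --- under your reading it would come out to $1/\ell$. Likewise, at a reservoir vertex the equation gives $x_v=\outdeg(v^*)\cdot x_{v_{i,1}}$, which is $\ell m\cdot(1/m)=\ell$ under the correct convention but $k\ell/m$ under yours; and your check at $v^*$ reduces to the identity $k=\ell$, which is false in general. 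A smaller but related slip: the weights of the in-neighbors never enter the left-hand side of the balance equation --- only their out-degrees do --- so phrases like ``the $\ell$ reservoir vertices of $R_i$ (each of weight $\propto\ell$)'' suggest you are balancing the wrong quantity at $v_{i,1}$.

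None of this is a flaw in the method, and with the parameter roles corrected every one of your four cases does close in one line (interior path vertices pass weight $1$ along; $v_{i,1}$ gets $1/m$ from its $m$ unit-out-degree in-neighbors; a reservoir vertex gets $\ell=\outdeg(v^*)\cdot(1/m)=\ell m/m$; and at $v^*$ both sides equal $\ell$). But as written, your case analysis verifies a different graph, so you should rewrite the structural description and redo the four checks before the argument can be accepted.
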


\begin{corollary}
    Undirected graphs, carousels, books, metafunnels, and superstars are balanced.
\end{corollary}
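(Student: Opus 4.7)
The plan is to reduce the corollary to the explicit fixation probability formulas already computed in Claims \ref{thm:bi-directional-fps}--\ref{thm:superstar-fps}. The key observation is that the defining identity for a \emph{balanced} graph coincides, term-by-term, with the linear system of \cref{lemma:fp-linear-system} once one substitutes $x_v = 1/\indeg(v)$. Consequently, a graph $\Graph$ is balanced if and only if the numbers $1/\indeg(v)$, normalized to sum to $1$, solve that system; and by the uniqueness asserted in \cref{lemma:fp-linear-system}, this holds if and only if the neutral fixation probabilities satisfy $\fp^{r=1}(\Set{v}) \propto 1/\indeg(v)$ for every $v\in\Vertices$. So it suffices to check this proportionality separately in each of the five families.

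For undirected graphs, $\indeg(v) = \deg(v)$, and \cref{thm:bi-directional-fps} already gives $\fp(\Set{v}) \propto 1/\deg(v)$, which is what we need. For carousels, the only in-neighbours of $v\in S_i$ are the vertices of $S_{i-1}$, so $\indeg(v) = |S_{i-1}|$; \cref{thm:carousel-fps} then yields $\fp(\Set{v}) \propto 1/|S_{i-1}| = 1/\indeg(v)$. For the remaining three families I would carry out the same verification: read the in-degree of each vertex type directly from the construction (book beginnings, endings, and page positions; metafunnel layers $V_i$; superstar leaves $v_{i,j}$, reservoirs $R_i$, and the central hub $v^{\ast}$), and check that the values listed in Claims \ref{thm:book-fps}, \ref{thm:metafunnel-fps}, and \ref{thm:superstar-fps} are inversely proportional to those in-degrees with a single shared constant across the whole graph.

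The main obstacle is bookkeeping rather than insight: for the books, metafunnels, and superstars one has to correctly identify the in-degree of every vertex type, including the boundary or hub vertices whose in-degrees involve the structural parameters $\ell$ and $m$ and the sizes of the reservoirs, and then confirm that the product $\fp(\Set{v})\cdot \indeg(v)$ is the same number for all $v$. Since the fixation probability formulas in those claims were themselves derived by solving the system of \cref{lemma:fp-linear-system}, this comparison reduces to a finite, routine case check with no additional probabilistic content, and the corollary follows by collecting the five verifications.
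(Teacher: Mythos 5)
Your proposal is correct and follows essentially the same route as the paper: the paper's proof is a one-line citation of \cref{thm:balanced-graph-properties} together with Claims \ref{thm:bi-directional-fps}--\ref{thm:superstar-fps}, and what you write is exactly the reasoning that citation compresses --- the balancedness identity is, term by term, the statement that $x_v\propto 1/\indeg(v)$ solves the system of \cref{lemma:fp-linear-system}, so it is equivalent to $\fp^{r=1}(\Set{v})\propto 1/\indeg(v)$, which one then reads off from the claims family by family. You helpfully make explicit the converse direction of \cref{thm:balanced-graph-properties}(\ref{item:balanced-graph-properties-1}) (that the fixation probabilities themselves always satisfy the linear system, so $\fp\propto 1/\indeg$ forces $1/\indeg$ to be a solution), which the paper leaves implicit. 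One caution on the case checks you defer: for the $(s_1,\dots,s_\ell)$-book as defined here the values in Claim \ref{thm:book-fps} are \emph{not} proportional to $1/\indeg$ --- e.g.\ $\fp(\Set{b})\propto\ell$ while $\fp(\Set{p^i_j})\propto 1$ for $j>2$, yet both vertices have in-degree $1$ --- so the ``routine'' verification fails for pages with more than one interior vertex; this appears to be an inconsistency between the paper's own definition and Claim \ref{thm:book-fps} rather than a flaw in your method, but it shows the remaining checks are not purely mechanical and should actually be carried out.
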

\begin{proof}
    This follows from \cref{thm:balanced-graph-properties},
    and Claims \ref{thm:bi-directional-fps} to \ref{thm:superstar-fps}.
\end{proof}

\begin{definition}
    See \S 1.1.3 of \cite{galanis2017amplifiers} for the definition of a \emph{$(k, \ell, m)$-megastar}.
\end{definition}
\begin{claim}
    Suppose $\Graph=(\Vertices, \Edges)$ is a $(k, \ell, m)$-megastar.
    Then
    \begin{enumerate}
        \item $\FixationProbabilityUnderNeutralEvolution{\Set{v^*}}\propto m$
        \item $\FixationProbabilityUnderNeutralEvolution{\Set{v}}\propto \ell m$
        for each $v\in R_1\cup\cdots\cup R_\ell$
        \item $\FixationProbabilityUnderNeutralEvolution{\Set{v}}\propto m$
        for each $v\in K_1\cup\cdots\cup K_\ell$
        \item $\FixationProbabilityUnderNeutralEvolution{\Set{a_i}}\propto 1$
        for each $i\in[\ell]$.
    \end{enumerate}
\end{claim}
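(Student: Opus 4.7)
The plan is to apply \cref{lemma:fp-linear-system}: since the balance system \eqref{eq:lin-system} has a unique solution up to normalization, it suffices to verify that the proposed weights $p_v$ --- $p_{v^*}=m$, $p_v=\ell m$ for $v\in R_1\cup\cdots\cup R_\ell$, $p_v=m$ for $v\in K_1\cup\cdots\cup K_\ell$, and $p_{a_i}=1$ --- satisfy
\[
p_v \sum_{u\colon u\to v\in \Edges}\frac{1}{\OutDegree{u}}
\;=\;\frac{1}{\OutDegree{v}}\sum_{w\colon v\to w\in \Edges} p_w
\]
at every vertex. By the symmetry of the $(k,\ell,m)$-megastar --- vertices inside a single $R_i$ are interchangeable, vertices inside a single $K_i$ are interchangeable, and the $\ell$ blades are interchangeable --- the check reduces to one identity for each of the four vertex types.

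For each type I would read off the in- and out-neighborhoods together with the corresponding out-degrees from the definition of a megastar in \S1.1.3 of \cite{galanis2017amplifiers}, substitute the proposed weights, and confirm that the two sides agree. The four identities encode flow balance at the successive layers of the megastar: at the hub $v^*$ the weight $m$ balances against contributions from the $\ell$ reservoirs; at a reservoir vertex the weight $\ell m$ balances the flow between $v^*$ and $a_i$; at a clique-piece vertex the weight $m$ balances its local neighborhood; and at $a_i$ the unit weight balances the flow from $K_i$ back into $R_i$. In each case, after substitution the factors of $k$, $\ell$, and $m$ cancel in a straightforward way because the weights have been scaled to match the size of the layer that each vertex feeds into or receives from.

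The main obstacle is bookkeeping rather than any new idea. Unlike the undirected graphs, carousels, books, metafunnels, and superstars in the preceding claims, the megastar is \emph{not} balanced, so the shortcut $\FixationProbabilityUnderNeutralEvolution{\Set{v}}\propto 1/\indeg(v)$ used in Claims~\ref{thm:bi-directional-fps}--\ref{thm:superstar-fps} does not apply, and one must genuinely work with the asymmetric linear system. Care is therefore required in tracking how $k$, $\ell$, and $m$ appear in each out-degree and each edge multiplicity, especially at the interface between a reservoir $R_i$, its linker $a_i$, and the clique-part $K_i$. Once the four identities are verified, \cref{lemma:fp-linear-system} immediately yields the stated proportionalities; summing $p_v$ over all vertices then gives the normalization constant if actual probabilities are desired.
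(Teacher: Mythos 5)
Your plan coincides with the paper's intended approach: the paper states this claim without any proof, implicitly relying on exactly the verification you describe, namely checking the balance system \eqref{eq:lin-system} of \cref{lemma:fp-linear-system} at each of the four vertex types of the megastar and invoking the uniqueness of the normalized solution. Your remark that the megastar, unlike the undirected graphs, carousels, books, metafunnels, and superstars, is \emph{not} balanced (so the shortcut $\FixationProbabilityUnderNeutralEvolution{\Set{v}}\propto 1/\indeg(v)$ is unavailable and the asymmetric system must be checked directly) is correct and consistent with the paper's separate treatment of megastars; the only thing you leave undone is the explicit arithmetic at the four vertex types, which the paper also omits.
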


\begin{corollary}
     A $(k, \ell, m)$-megastar has $\FixationProbability_{\text{min}} \geq 1/h(N)$ for some polynomial
     $N$ so long as
     \begin{equation}
         1\cdot \ell + m\cdot (k\ell) + (\ell m)\cdot (\ell m) + m \leq h(N).
     \end{equation}
\end{corollary}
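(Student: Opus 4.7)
The plan is to reduce the bound on $\pmin$ to reading off the normalizing constant implicit in the preceding Claim. Under neutral drift ($r=1$) the fixation probability is additive, a fact we already invoked inside \cref{lemma:fp-linear-system}; hence for every nonempty $S\subseteq\Vertices$,
\[
\FixationProbabilityUnderNeutralEvolution{S}=\sum_{v\in S}\FixationProbabilityUnderNeutralEvolution{\Set{v}}\geq \min_{v\in\Vertices}\FixationProbabilityUnderNeutralEvolution{\Set{v}}.
\]
So $\pmin$ is attained on a singleton, and therefore equals $\min_v p_v/Z$, where $p_v\in\{1,m,\ell m\}$ is the weight assigned to $v$ by the previous Claim and $Z=\sum_{v\in\Vertices}p_v$ is the common normalizer. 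Since the minimum weight is $1$ (attained at each $a_i$), this gives $\pmin = 1/Z$, and the only remaining task is to upper bound $Z$.

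The second step is then a direct accounting using the structure of a $(k,\ell,m)$-megastar from \S1.1.3 of \cite{galanis2017amplifiers}: there is a single center $v^*$ of weight $m$; the $\ell$ vertices $a_i$ of weight $1$; the $\ell$ groups $K_i$ of size $k$ (so $k\ell$ vertices in total) of weight $m$; and the $\ell$ reservoir groups $R_i$ of size $m$ (so $\ell m$ vertices in total) of weight $\ell m$. Summing the four contributions gives exactly
\[
Z = m + \ell + m\cdot(k\ell) + (\ell m)\cdot(\ell m),
\]
which by hypothesis is at most $h(N)$. Combining the two steps yields $\pmin \geq 1/Z \geq 1/h(N)$, as claimed.

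No real analytic obstacle is anticipated: the only piece of bookkeeping worth double checking is that the group sizes in the cited definition really are $|K_i|=k$ and $|R_i|=m$, so that the four terms of $Z$ line up exactly with the four terms in the hypothesized polynomial bound; this is immediate from \cite{galanis2017amplifiers}. Everything else is an application of the previous Claim together with additivity of neutral fixation probability.
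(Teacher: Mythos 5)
Your proposal is correct and is exactly the intended argument: the paper states this corollary without proof, and the natural justification is precisely what you give --- additivity of neutral fixation probability reduces $\pmin$ to the minimum singleton value $1/Z$, the minimum weight $1$ is attained at the vertices $a_i$, and the normalizer $Z$ is the sum of the weights from the preceding Claim, whose four contributions ($\ell$ vertices of weight $1$, $k\ell$ vertices of weight $m$, $\ell m$ vertices of weight $\ell m$, and $v^*$ of weight $m$) match the four terms in the hypothesis term by term. No gap.
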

\noindent In particular, so long as $k$, $\ell$, and $m$ are each bounded above by some polynomial in $N$ then $\FixationProbability_{\text{min}}$ is bounded below by the inverse of a
polynomial.

\section{Quickly estimating fixation probabilities with provable confidence}\label{sec:fpras}

\begin{theorem}
    For $\Graph=(\Vertices,\Edges)$, $u\in\Vertices$, and $r\geq 1$,
    there is a fully polynomial randomised approximation scheme (FPRAS) for computing $\FixationProbability_r(\Graph, u)$ if there is some polynomial $h$ such that $\pmin \geq 1/h(N)$.
\end{theorem}

\begin{proof}
    What follows is a standard technique.
    We aim to approximate the fixation probability
    within a multiplicative factor of $\e>0$
    with probability $1-\nu$.
    Let $\nu_1,\nu_2\geq 0$ be constants and let $\nu:=\nu_1+\nu_2$.
    We first compute $\pmin$ and then $T_1$ in $\poly(N)$ time by Lemma
    \ref{lemma:fp-linear-system}.
    We run $s = \lceil 2(\ln 2/\nu_1)/(\e\FixationProbability_{\text{min}})^2\rceil$ independent simulations of the birth-death process on $\Graph$
    for at most $t = \lceil T_1s/\nu_2 \rceil$ steps.
    If any simulation does not reach absorption in the allocated time, we return some arbitrary number as the fixation probability.
    Otherwise, take $X_1,\ldots,X_s$ to be random $0$-$1$ indicator variables such that
    $X_i=1$ if and only if the simulation reaches fixation.
    We estimate $\hat{\FixationProbability}_r(\Graph,u) := \frac1s\sum_{i=1}^s X_i$ as the fixation probability.
    Then $\Pr[|\hat{\FixationProbability}_r(\Graph,u)-\FixationProbability_r(\Graph,u)|>\e\,\FixationProbability_r(\Graph,u)]\leq 2\exp(-\e^2s\,\FixationProbability_r(\Graph,u)^2/2)\leq \nu_1$ by a Chernoff bound.
    The probability that a simulation does not reach absorption in the allocated amount of
    time is at most $\nu_2/s$ by Markov's inequality. 
    By a union bound, the probability that there is some simulation that does not reach absorption in the allocated time is at most $\nu_2$.
    Thus the probability of error in this approximation algorithm is at most $\nu_1 + \nu_2 =\nu$.
    Both $s$ and $t$ are polynomial in $N$ and each step of the birth-death process can be computed
    in a constant amount of time.
\end{proof}

\section{Computer experiments}\label{sec:oriented}
\begin{figure}[h]
  \centering
   \includegraphics[width=\linewidth]{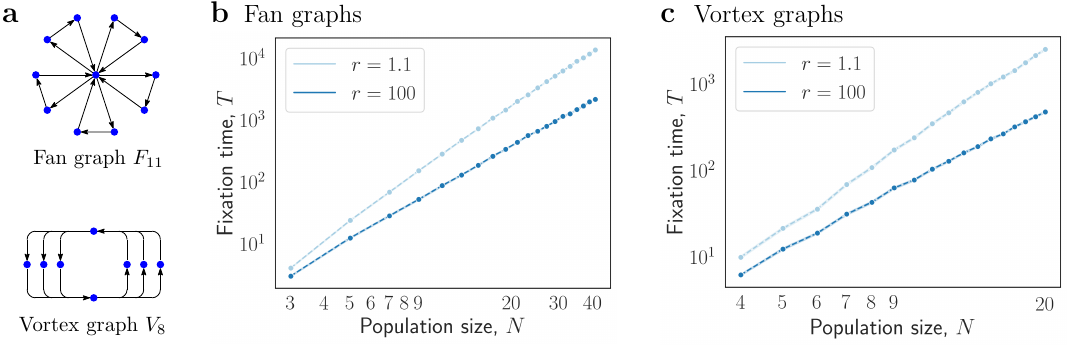}

\caption{
\textbf{Fixation time on slow oriented graphs.}
\textbf{a,} The Fan graph with $k$ blades has $N=2k+1$ nodes and $3k$ one-way edges (here $k=5$ which yields $N=11$). The Vortex graph with batch size $k$ has $N=2k+2$ nodes and $4k$ edges (here $k=3$ which yields $N=8$).
\textbf{b-c,} For both the Fan graphs and the Vortex graphs the fixation time scales roughly as $N^2$, both for $r=1.1$ and $r=100$. Each data point is an average over 100 simulations.
We note that the plots are on a log-log scale.
}
\label{fig:fans-loglog}
\end{figure}



\end{document}